\documentclass[11pt,a4paper]{article}
\usepackage{jheppub}

\makeatletter
\def\@fpheader{\relax}
\makeatother

\allowdisplaybreaks

\newtheorem{lemma}{Lemma}[section]

\title{A perturbative Liouville prescription for the celestial three-gluon amplitude}

\author[a]{Grzegorz Biskowski,}
\author[b]{Franco Ferrari,}
\author[b]{Marcin R.~Pi\c{a}tek,}
\author[c]{Artur R.~Pietrykowski}
\affiliation[a]{Doctoral School University of Szczecin, Mickiewicza 18, 70--384 Szczecin, Poland}
\affiliation[b]{Institute of Physics, University of Szczecin, Wielkopolska 15, 70--451 Szczecin, Poland}
\affiliation[c]{Computer Science Department, WSB Merito University in Pozna\'{n}, Powsta\'{n}c\'{o}w Wielkopolskich 5, 
				61--895 Pozna\'{n}, Poland}

\emailAdd{biskowski.grzegorz@gmail.com}
\emailAdd{franco.ferrari@usz.edu.pl}
\emailAdd{marcin.piatek@usz.edu.pl}
\emailAdd{hearthie@gmail.com}

\abstract{
We study the celestial three-gluon amplitude in a dilaton background through the Mellin--Liouville formulation proposed by 
Stieberger, Taylor and Zhu (STZ). The original map contains an ambiguity in the identification of Liouville and Mellin variables; 
we resolve it by requiring global conformal covariance and compatibility with the semiclassical expansion of Liouville theory. 
This uniquely fixes the operator normalization and the parameter dictionary, and leads to a controlled expansion in the Liouville 
coupling $b$. Starting from the full Liouville DOZZ three-point function, we derive the leading and first subleading terms in the 
$b^2$ expansion. The leading term reproduces the tree-level Yang--Mills amplitude in the small total momentum limit, as 
anticipated in the STZ proposal. The one-loop correction can be written in closed form using modified Bessel functions, and its 
soft limit exhibits a clear separation into geometric and logarithmic contributions. The resulting framework extends the STZ 
proposal to finite-$b$ corrections in a consistent and computable way.
}

\begin{document}
\maketitle

\section{Introduction}
Celestial holography proposes a duality between four-dimensional scattering amplitudes and correlation functions of a two-
dimensional conformal field theory living on the celestial sphere.\footnote{See, e.g., 
Refs.~\cite{Strominger2017,Pasterski:2021rjz,Raclariu:2021zjz,Donnay:2023celestial,
Pasterski:2021dqe,Pasterski:2017kqt,Schreiber:2017jsr,Kalyanapuram:2020aya,Banerjee:2020kaa,Stieberger2023a,Melton:2024akx,
Giribet:2024vnk,Donnay:2025yoy} for 
reviews, pedagogical introductions, and seminal works.} In 
this framework, asymptotic four-dimensional states are mapped to conformal primary operators by Mellin transforms, and scattering 
amplitudes are recast as celestial correlators with definite conformal properties.

Recently, Stieberger, Taylor and Zhu (hereafter STZ) proposed in \cite{Stieberger2023b} a striking realization of this idea for 
the three-gluon amplitude in a dilaton background, expressing it in terms of a Liouville three-point function. The tree-level 
agreement obtained in that work is highly nontrivial and provides strong evidence for a deep connection between celestial 
amplitudes and Liouville theory. At the same time, the proposed identification leaves a residual freedom in the Mellin--Liouville 
map and in operator normalizations. This ambiguity obstructs a systematic extension to finite-$b$ corrections, i.e. to loop-level 
contributions in the celestial amplitude.

In this paper we resolve this ambiguity by imposing two physically motivated and mutually consistent requirements: (i) global 
conformal covariance of the celestial correlator, including the canonical $z_{ij}$-dependent prefactor, and (ii) compatibility 
with the semiclassical (small-$b$) expansion of quantum Liouville theory. These conditions uniquely fix the Mellin--Liouville map 
and the normalization of the celestial gluon operators. As a result, they determine how finite-$b$ corrections organize and 
remove the inconsistencies present in the original formulation.

Within this framework, the celestial three-gluon amplitude admits a controlled perturbative expansion. At leading order, the 
inverse Mellin transform in the small total momentum limit reproduces the standard tree-level Yang--Mills amplitude, in agreement 
with the intuition underlying \cite{Stieberger2023b}. At subleading order, we derive an explicit analytic expression for the 
first finite-$b$ correction. Remarkably, the result can be written in closed form in terms of modified Bessel functions, yielding 
a compact and tractable representation of the one-loop amplitude in celestial variables.

The approach developed here is new in its concrete implementation, but it is firmly rooted in the physical intuition and 
conjectural framework introduced by STZ. In particular, the identification of celestial amplitudes with Liouville correlators 
captures an essential structural insight, which we show can be made fully consistent and extended beyond leading order. The 
present work should therefore be viewed as a refinement and completion of that proposal, providing a precise operator dictionary 
and a systematic perturbative expansion.

The paper is organized as follows. In section~\ref{ML_amplitude}, we formulate the Mellin--Liouville representation of the 
amplitude and establish the consistency conditions that fix the map. We then construct the expansion in powers of $b^2$ of the 
celestial three-gluon correlator. In section~\ref{Leading_order}, 
we analyze the leading-order structure and demonstrate the emergence 
of the Liouville--Yang--Mills correspondence at tree level. Section~\ref{Subleading} is devoted to the subleading structure, 
where we develop an operator representation of the relevant Mellin integrals, derive their exact decomposition in terms of 
modified Bessel functions, and analyze the resulting one-loop amplitude, including its behavior in the soft limit of vanishing 
total momentum and the structure of the associated infrared logarithms. We conclude in section~\ref{sec:concluding_remarks} 
with a discussion of the results and directions for future work. 
Technical details are collected in appendices~\ref{app:operator_proof} and~\ref{app:soft_limit_C1}.

\section{Mellin--Liouville formulation of the amplitude}
\label{ML_amplitude}
\subsection{Consistency conditions of the Mellin--Liouville map}
We start from the STZ celestial representation (inverse Mellin transform) \cite{Stieberger2023b}
\begin{align}\label{3GIcel}
{\cal A}_{\rm 3gluon}(\omega_i,z_i,\bar z_i)
&=\left(\frac{1}{2\pi i}\right)^3\int_{c-i\infty}^{c+i\infty}\prod_{j=1}^3 {\rm d}\Delta_j\;M^{\sum_{j}\Delta_j-3}\;\prod_{j=1}^3\omega_{j}^{-\Delta_j}\nonumber\\
&\qquad\times
\left\langle O^{-a_1}_{\Delta_1}(z_1,\bar z_1)
O^{-a_2}_{\Delta_2}(z_2,\bar z_2)
O^{+a_3}_{\Delta_3}(z_3,\bar z_3)\right\rangle.
\end{align}
Here $M$ denotes a renormalization scale introduced so that the three-gluon amplitude has the correct mass dimension $-3$. All 
inverse Mellin integrals are performed along vertical contours $\Re\Delta_i=c>0$; after evaluating the integrals one takes the 
limit $c\to0^+$. The quantities $\omega_i$, $z_i,\bar z_i$ and $\Delta_i$ denote, respectively, the energy, the celestial 
coordinates, and the celestial conformal dimension of the $i$-th particle.

The three-point correlator entering \eqref{3GIcel} involves gluon operators constructed from chiral WZW currents, light Liouville 
vertex operators
\(\mathsf{V}_{\alpha}(z,\bar z)=\mathrm{e}^{2\alpha\phi(z,\bar z)}\), \(\alpha=b\sigma\),
and helicity-dependent normalization factors.

Accordingly, the correlator factorizes into three pieces: (i) the three-point correlator of the chiral currents; (ii) the 
normalization factors; and (iii) the Liouville three-point function for light primary operators.

We require \emph{semiclassical consistency}, by which we mean the joint enforcement of three conditions:
\begin{itemize}
\item[1.] the celestial correlator transforms covariantly under global conformal transformations with the canonical prefactor
\(\prod_{i<j}|z_{ij}|^{2(h_k-h_i-h_j)}\), where the exponents are given by the Liouville conformal weights;
\item[2.] the Liouville DOZZ three-point function admits the small-$b$ expansion for light primaries with subleading 
\emph{$b$-independent} coefficients $\Omega_n$;
\item[3.] the inverse Mellin transform of the celestial correlator reproduces the tree-level three-gluon amplitude at leading order 
$\mathcal{O}(b^0)$.
\end{itemize}

Imposing conditions (1)--(3) fixes the Mellin--Liouville mapping and the overall normalization uniquely. 
These postulates can be realized by adopting appropriately modified definitions of the gluon operators introduced in 
\cite{Stieberger2023b}. We therefore define the gluon operators as follows:
\begin{align}\label{O+}
O^{+a}_{\Delta}(z,\bar z)&=\mathrm{F}_{+}(\Delta,\mu,b)\,J^{a}(z)\,{\sf V}_{h(\tfrac{1}{2}(\Delta-1))}(z,\bar z),\\\label{O-}
O^{-a}_{\Delta}(z,\bar z)&=\mathrm{F}_{-}(\Delta,\mu,b)\,\hat J^{a}(z)\,{\sf V}_{h(\tfrac{1}{2}(\Delta+1))}(z,\bar z).
\end{align}
The normalization factors are chosen as\footnote{Here, $\mu$ denotes the Liouville cosmological constant, and $\gamma(x)\equiv
\Gamma(x)/\Gamma(1-x)$.}
\begin{align}\label{F+}
\mathrm{F}_{+}(\Delta,\mu,b)&=\left[\pi\mu\gamma(b^2)b^{-2b^2}\right]^{\frac{1}{2}(\Delta-1)}\Gamma(\Delta-1),\\\label{F-}
\mathrm{F}_{-}(\Delta,\mu,b)&=\left[\pi\mu\gamma(b^2)b^{-2b^2}\right]^{\frac{1}{2}(\Delta+1)-\frac{1}{2}}\Gamma(\Delta+1).
\end{align}
The chiral currents have the exact correlator\footnote{
Here, $z_{ij}\equiv z_{i}-z_{j}$, the index $a$ labels the adjoint representation of the Lie group, 
and $f^{a_{1}a_{2}a_{3}}$ are the structure constants.}
\begin{equation}
\left\langle\hat J^{a_1}(z_1)\hat J^{a_2}(z_2)J^{a_3}(z_3)\right\rangle=f^{a_1a_2a_3}\;\frac{z_{12}^3}{z_{23}z_{31}}.
\end{equation}
In (\ref{O+}) and (\ref{O-}), the operators ${\sf V}_{h(\sigma)}(z,\bar z)$ denote light Liouville primary vertex operators
\begin{equation}
{\sf V}_{h(\sigma)}(z,\bar z)={\rm e}^{2b\sigma\phi(z,\bar z)}
\end{equation}
with identical holomorphic and antiholomorphic conformal weights\footnote{Using \(\alpha=b\sigma\) and 
\(h(\alpha)=\alpha({\rm Q}-\alpha)\) with \({\rm Q}=b+\tfrac{1}{b}\) gives 
\(h(b\sigma)=b\sigma\bigl(b+\tfrac{1}{b}-b\sigma\bigr)=\sigma+b^{2}\sigma(1-\sigma)\).}
\begin{equation}\label{Liou_h}
h(\sigma)=\bar h(\sigma)=\sigma + b^{2}\sigma(1-\sigma).
\end{equation}

With these choices the three-point celestial correlator may be written as
\begin{equation}\label{3Gcel}
\boxed{\;\;\;\begin{aligned}
\left\langle O^{-a_1}_{\Delta_1}(z_1,\bar z_1)
	O^{-a_2}_{\Delta_2}(z_2,\bar z_2)
	O^{+a_3}_{\Delta_3}(z_3,\bar z_3)\right\rangle
&\;=\; \left\langle\hat J^{a_1}(z_1)\hat J^{a_2}(z_2)J^{a_3}(z_3)\right\rangle
\,\mathrm{F}_{1-}\,\mathrm{F}_{2-}\,\mathrm{F}_{3+}
\\[4pt]	
&\hspace{-150pt}\;\times\;
\left\langle {\sf V}_{h(\frac{1}{2}(\Delta_{1}+1))}(z_1,\bar z_1){\sf V}_{h(\frac{1}{2}(\Delta_{2}+1))}(z_2,\bar z_2)
{\sf V}_{h(\frac{1}{2}(\Delta_{3}-1))}(z_3,\bar z_3)\right\rangle
\\[4pt]	
&\;=\; f^{a_1a_2a_3}\;\frac{z_{12}^3}{z_{23}\,z_{31}}\;\mathrm{F}_{1-}\,\mathrm{F}_{2-}\,\mathrm{F}_{3+}
\\[4pt]
&\hspace{-150pt}\times\; 
(z_{12}\bar z_{12})^{h(\sigma_3)-h(\sigma_1)-h(\sigma_2)}
(z_{13}\bar z_{13})^{h(\sigma_2)-h(\sigma_1)-h(\sigma_3)}
(z_{23}\bar z_{23})^{h(\sigma_1)-h(\sigma_2)-h(\sigma_3)}
\\[4pt]
&\times C(b\sigma_1, b\sigma_2, b\sigma_3),
\end{aligned}\;\;\;}
\end{equation}
where $C(\alpha_1,\alpha_2,\alpha_3)$ is the DOZZ structure constant,\footnote{Here, 
$\Upsilon_0\equiv\Upsilon_b'(0)=\left.\frac{\mathrm{d}}{\mathrm{d}x}\Upsilon_b(x)\right|_{x=0}$.}
\begin{eqnarray}
\label{DOZZ}
C(\alpha_1, \alpha_2, \alpha_3)
&=&
\left[\pi\mu\gamma(b^2)b^{2-2b^2}\right]^{({\rm Q}-\alpha_1-\alpha_2-\alpha_3)/b} \times\nonumber
\\[5pt]
&&\hspace{-60pt}\frac{\Upsilon_{0}\Upsilon_{b}(2\alpha_1)\Upsilon_{b}(2\alpha_2)\Upsilon_{b}(2\alpha_3)}
{\Upsilon_{b}(\sum_{i} \alpha_i - {\rm Q})\Upsilon_{b}(\alpha_1 + \alpha_2 -
\alpha_3) \Upsilon_{b}(\alpha_2 + \alpha_3 - \alpha_1)\Upsilon_{b}(\alpha_3
+\alpha_1-\alpha_2)}\;.
\end{eqnarray}
The structure constant~\eqref{DOZZ} is expressed in terms of the special function
$\Upsilon_b(x)$, which can be constructed from Barnes' double Gamma function.\footnote{
For background on two-dimensional conformal field 
theory and for details of Liouville theory, see Refs.~\cite{BPZ,S,DO,T1,ZZ5,T3,T4,Nakayama:2004vk,HMW,SR14} 
and references therein.} 

In eq.~\eqref{3Gcel} the parameters are related by
\begin{equation}\label{sigDel}\boxed{\;
\sigma_1\;=\;\frac{\Delta_1+1}{2},\qquad
\sigma_2\;=\;\frac{\Delta_2+1}{2},\qquad
\sigma_3\;=\;\frac{\Delta_3-1}{2}.\;}
\end{equation}
Then, the conformal weights of the light Liouville vertex operators read
\begin{equation}\label{hij}
\boxed{\;\begin{aligned}
h_1\;\equiv\;h(\sigma_1)&\;=\;\frac{\Delta_1+1}{2}+\frac{b^2}{4}(\Delta_1+1)(1-\Delta_1),\\
h_2\;\equiv\;h(\sigma_2)&\;=\;\frac{\Delta_2+1}{2}+\frac{b^2}{4}(\Delta_2+1)(1-\Delta_2),\\
h_3\;\equiv\;h(\sigma_3)&\;=\;\frac{\Delta_3-1}{2}+\frac{b^2}{4}(\Delta_3-1)(3-\Delta_3).
\end{aligned}\;}
\end{equation}

At the end of this subsection, a few remarks are in order. First, we emphasize that the expression in \cite{Stieberger2023b} 
analogous to \eqref{3Gcel} ({\it cf.}~eq.~(9) therein) does not include the full $z_{ij}$-dependent Liouville prefactor required 
by conformal symmetry. In the correlator \eqref{3Gcel}, the relation between the Liouville conformal weights 
\eqref{Liou_h} and the Mellin parameters $\Delta_i$ is encoded through the Liouville weights themselves, as in \eqref{hij}, 
rather than through the exponent $\sigma(\cdot)$ used in the STZ proposal. The latter breaks conformal symmetry and leads to 
inconsistencies in the perturbative expansion. By contrast, the correlator \eqref{3Gcel} yields a consistent expansion whose 
leading ${\cal O}(b^0)$ term reproduces the tree-level Yang--Mills amplitude, while higher-order coefficients do not acquire 
spurious dependence on $b^2$, unlike in the STZ construction.

\subsection{Perturbative expansion}
The problem posed in \cite{Stieberger2023b} 
is to construct a unique perturbative (loop) expansion of the three-gluon amplitude 
\eqref{3GIcel} in the small parameter \(b\). To this end one expands the integrand of eq.~\eqref{3GIcel} --- in particular the 
correlator in eq.~\eqref{3Gcel} --- in powers of \(b^2\) and performs the integration term by term to obtain the perturbative 
coefficients. 

In the correlator~\eqref{3Gcel}, there are three distinct sources of dependence on \(b\):
(i) the normalization factors \(\mathrm{F}_{1-}\mathrm{F}_{2-}\mathrm{F}_{3+}\);
(ii) the factor \(\prod_{i<j}|z_{ij}|^{2(h_k-h_i-h_j)}\);
and (iii) the DOZZ structure constant \(C\!\left(b\sigma_1,b\sigma_2,b\sigma_3\right)\).

The normalization factors need not be expanded in \(b\). 
As shown below, they have been chosen to cancel spurious contributions to the overall prefactor, 
so that the perturbative expansion may be performed without expanding these factors.
In particular, the leading and subleading contributions yield sensible results.

We begin by expanding the factor depending on \(z_{ij}\).
To this end, we first note that the Liouville conformal weights
\(h_1,h_2,h_3\) are related to \(\Delta_1,\Delta_2,\Delta_3\) via
Eqs.~\eqref{hij}. It then follows that 
\begin{eqnarray}
h_3-h_1-h_2&=& \frac{1}{2}\left(\Delta_3-\Delta_1-\Delta_2-3\right)
+\frac{b^2}{4}\left(\Delta_{1}^{2}+\Delta_{2}^{2}-\Delta_{3}^{2}+4\Delta_{3}-5\right),\nonumber\\
h_2-h_1-h_3&=& \frac{1}{2}\left(\Delta_2-\Delta_1-\Delta_3+1\right)
+\frac{b^2}{4}\left(\Delta_{1}^{2}+\Delta_{3}^{2}-\Delta_{2}^{2}-4\Delta_{3}+3\right),\nonumber\\
h_1-h_2-h_3&=& \frac{1}{2}\left(\Delta_1-\Delta_2-\Delta_3+1\right)
+\frac{b^2}{4}\left(\Delta_{3}^{2}+\Delta_{2}^{2}-\Delta_{1}^{2}-4\Delta_{3}+3\right). 
\end{eqnarray}

Consequently, we expand the \(z_{ij}\)-dependent factor as a series in \(b\) about \(b=0\):
\begin{equation}\label{Zs}
(z_{12}\bar z_{12})^{h_3-h_1-h_2}
(z_{13}\bar z_{13})^{h_2-h_1-h_3}
(z_{23}\bar z_{23})^{h_1-h_2-h_3}
= a_0 + a_1 b^2 + a_2 b^4 + a_3 b^6 + \cdots,
\end{equation}
where
\begin{eqnarray}\label{a0}
a_0(\Delta_i;z_i,\bar z_i) &=&
(z_{12}\bar z_{12})^{\tfrac{1}{2}(\Delta_3-\Delta_1-\Delta_2-3)}
(z_{13}\bar z_{13})^{\tfrac{1}{2}(\Delta_2-\Delta_1-\Delta_3+1)}\nonumber
\\
&\times& (z_{23}\bar z_{23})^{\tfrac{1}{2}(\Delta_1-\Delta_2-\Delta_3+1)},\\[6pt]
T(\Delta_i;z_i,\bar z_i)&=&(\Delta_{1}^{2}+\Delta_{2}^{2}-\Delta_{3}^{2}+4\Delta_{3}-5)\,
\ln(z_{12}\bar z_{12}) 
\nonumber\\
&+&(\Delta_{1}^{2}+\Delta_{3}^{2}-\Delta_{2}^{2}-4\Delta_{3}+3)\,
\ln(z_{13}\bar z_{13}) \nonumber\\
&+&(\Delta_{3}^{2}+\Delta_{2}^{2}-\Delta_{1}^{2}-4\Delta_{3}+3)\,
\ln(z_{23}\bar z_{23}),
\end{eqnarray}
and
\begin{equation}\label{ana0T}
a_1=\tfrac{1}{4}\,a_0\,T,\qquad
a_2=\tfrac{1}{32}\,a_0\,T^2,\qquad
a_3=\tfrac{1}{384}\,a_0\,T^3.
\end{equation}

Our next task is to expand the DOZZ structure constant \(C\!\left(b\sigma_1,b\sigma_2,b\sigma_3\right)\) using the small-\(b\) 
asymptotic expansion of the \(\Upsilon_b\)-function \cite{Thorn},
\begin{equation}\label{Ub}
\Upsilon_b(b\sigma)
=
b\,\Upsilon_{0}\,
\frac{b^{\,b^2\sigma(1-\sigma)-\sigma}}{\Gamma(\sigma)}
\exp\left[
b^2\gamma_{E}\,\phi_{1}(\sigma)
+\sum_{n=1}^{\infty} b^{2(2n+1)}
\frac{\zeta(2n+1)}{2n+1}\,\phi_{2n+1}(\sigma)
\right].
\end{equation}
As a result, we obtain\footnote{Here,
\(\pi\tilde\mu\gamma(b^{-2})=\big(\pi\mu\gamma(b^{2})\big)^{b^{-2}}\).}
\begin{equation}\label{Cim}
C\!\left(b\sigma_1,b\sigma_2,b\sigma_3\right)=
P\!\left(b;\sigma_1,\sigma_2,\sigma_3\right)
A(\sigma_1,\sigma_2,\sigma_3)
\left(\,1+\sum_{n\ge1}b^{2n}\,\Omega_n(\sigma_1,\sigma_2,\sigma_3)\right),
\end{equation}
where
\begin{equation}
P\!\left(b;\sigma_1,\sigma_2,\sigma_3\right)=
\frac{\pi\tilde\mu\,\gamma(b^{-2})\left(\pi\mu\,\gamma(b^2)b^{-2b^2}\right)^{1-\sum_i\sigma_i}\gamma\left(\sum_i\sigma_i-1-
\frac{1}{b^2}\right)\,b^{-2b^2(\sum_i\sigma_i-1)}}{b^{5}}
\end{equation}
and
\begin{equation}
A(\sigma_1,\sigma_2,\sigma_3)=
\frac{\Gamma(\sigma_{1}+\sigma_{2}+\sigma_{3}-1)
\Gamma(\sigma_{2}+\sigma_{3}-\sigma_{1})
\Gamma(\sigma_{3}+\sigma_{1}-\sigma_{2})
\Gamma(\sigma_{1}+\sigma_{2}-\sigma_{3})}
{\Gamma(2\sigma_{1})\Gamma(2\sigma_{2})\Gamma(2\sigma_{3})}.
\end{equation}

The coefficients $\Omega_n$ can be computed analytically in a fully controlled manner.\footnote{See~\cite{Ferrari2025}.}
They are symmetric polynomials in $\sigma_1,\sigma_2,\sigma_3$.
For example, the first three coefficients are
\begin{eqnarray}
\Omega_1(\sigma_1,\sigma_2,\sigma_3) &=&-2\gamma_{E}(\sigma_1+\sigma_2+\sigma_3-1),
\\[5pt]
\Omega_2(\sigma_1,\sigma_2,\sigma_3) &=&2\gamma^{2}_{E}(\sigma_1+\sigma_2+\sigma_3-1)^2,
\end{eqnarray}
and
\begin{eqnarray}
\Omega_3(\sigma_1,\sigma_2,\sigma_3) &=& \frac{1}{6} \Bigg( 
-8\gamma^{3}_{E}(\sigma_1+\sigma_2+\sigma_3-1)^3 
\nonumber \\[3pt]
&& \quad -4(\sigma_1+\sigma_2+\sigma_3-1)\Big( 
1 + 3\sigma_1^3 - 2\sigma_2 - 2\sigma_3 
\nonumber \\[3pt]
&& \quad + \sigma_1^2(1 - 3\sigma_2 - 3\sigma_3) 
+ \sigma_1\big(-2+(2 - 3\sigma_2)\sigma_2 
\nonumber \\[3pt]
&& \qquad + (2 + 6\sigma_2)\sigma_3 - 3\sigma_3^2 \big) 
\nonumber \\[3pt]
&& \quad + (\sigma_2 + \sigma_3)\left( \sigma_2 + 3\sigma_2^2 + \sigma_3 
- 6\sigma_2\sigma_3 + 3\sigma_3^2 \right) 
\Big)\zeta(3) \Bigg)\,.
\end{eqnarray}
The expansion coefficients involve the Euler--Mascheroni constant \(\gamma_{E}\) and the values of the Riemann zeta function 
\(\zeta(\cdot)\).

The function \(A(\sigma_1,\sigma_2,\sigma_3)\) is the light semiclassical asymptotic of the DOZZ structure constant. 
In the expansion \eqref{Cim} it appears both as the leading (zeroth-order) contribution and as an overall
prefactor. Corrections to this prefactor at higher orders are captured by the
coefficients \(\Omega_n\).

Formula~\eqref{Cim} contains the prefactor
\(P\!\left(b;\sigma_1,\sigma_2,\sigma_3\right)\), which depends explicitly on the parameter \(b\).
For consistency with the conventions of ref.~\cite{Stieberger2023b} and to facilitate comparison with its results, we multiply
\(A(\sigma_1,\sigma_2,\sigma_3)\) by the factor \(\pi\tilde\mu/b\), thereby ``borrowing'' this factor from 
\(P\!\left(b;\sigma_1,\sigma_2,\sigma_3\right)\). 
The remaining factor is denoted by
\begin{equation}
\mathcal{N}\!\left(b;\sigma_1,\sigma_2,\sigma_3\right)
=\frac{\gamma(b^{-2})\left(\pi\mu\,\gamma(b^2)b^{-2b^2}\right)^{1-\sum_i\sigma_i}
\gamma\left(\sum_i\sigma_i-1-\frac{1}{b^2}\right)\,b^{-2b^2(\sum_i\sigma_i-1)}}{b^{4}}.
\end{equation}

Substituting eqs.~\eqref{Zs} and \eqref{Cim} into \eqref{3Gcel} and using the relations (\ref{a0})--(\ref{ana0T}), the 
three-gluon celestial correlator admits a perturbative expansion in powers of $b^2$:
\begin{equation}
\left\langle O^{-a_1}_{\Delta_1}
O^{-a_2}_{\Delta_2}
O^{+a_3}_{\Delta_3}\right\rangle
=
\mathcal{N}\,\big(\tilde{\mathcal A}^{(0)}+b^2\tilde{\mathcal A}^{(1)}+b^4\tilde{\mathcal A}^{(2)}+\cdots\big),
\end{equation}
with the first three coefficients
\begin{equation}\label{AAA}
\tilde{\mathcal A}^{(0)}=a_0\,\mathrm{PF},\quad
\tilde{\mathcal A}^{(1)}=a_0\!\left(\tfrac{1}{4}T+\Omega_1\right)\mathrm{PF},\quad
\tilde{\mathcal A}^{(2)}=a_0\!\left(\tfrac{1}{32}T^2+\tfrac{1}{4}T\Omega_1+\Omega_2\right)\mathrm{PF},
\end{equation}
where the common prefactor is
\begin{equation}
\mathrm{PF}\equiv f^{a_1a_2a_3}\,\frac{z_{12}^3}{z_{23}z_{31}}\,
\mathrm{F}_{1-}\,\mathrm{F}_{2-}\,\mathrm{F}_{3+}\,
\frac{\pi\tilde\mu}{b}\,A.
\end{equation}
This compact expansion makes manifest the perturbative structure of the celestial 
amplitude: the leading contribution is proportional to $a_0$, while the subleading corrections at ${\cal O}(b^2)$ and 
${\cal O}(b^4)$ are determined by universal combinations of $T$ and the $\Omega_i$, thereby providing a transparent and 
systematic framework for computing higher-order corrections.

The inverse Mellin transform defined in \eqref{3GIcel}, applied to
\begin{equation}
\left(\pi\mu\gamma(b^2)b^{-2b^2}\right)^{1-\sum_i\sigma_i}\tilde{\mathcal A}^{(0)},
\end{equation}
reproduces the STZ tree-level amplitude \cite{Stieberger2023b}:
\begin{equation}\label{A0_repeat}
{\cal A}_{\rm 3gluon}^{(0)}(\omega_i,z_i,\bar z_i)
=\frac{\pi}{b}\frac{\tilde\mu}{M^2}f^{a_1a_2a_3}
\frac{z_{12}^3}{z_{23}\,z_{31}}{\cal I}^{(0)}(\omega_{1},\omega_{2},\omega_{3}),
\end{equation}
with the Mellin integral
\begin{eqnarray}
\label{I0_repeat}
{\cal I}^{(0)}(\omega_{1},\omega_{2},\omega_{3})
&=&\left(\frac{1}{2\pi i}\right)^3
\int_{c-i\infty}^{c+i\infty}\!{\rm d}\Delta_1{\rm d}\Delta_2{\rm d}\Delta_3\;
M^{\Delta_1+\Delta_2+\Delta_3-1}
\prod_{i=1}^3\omega_{i}^{-\Delta_i}
\nonumber\\[4pt]
&\times&
\Gamma\Big(\frac{\Delta_1+\Delta_2+\Delta_3-1}{2}\Big)\,
\Gamma\Big(\frac{\Delta_1+\Delta_3-\Delta_2-1}{2}\Big)\,
\Gamma\Big(\frac{\Delta_2+\Delta_3-\Delta_1-1}{2}\Big)
\nonumber\\[4pt]
&\times&
\Gamma\Big(\frac{\Delta_1+\Delta_2-\Delta_3+3}{2}\Big)\;a_0(\Delta_i;z_i,\bar z_i),
\end{eqnarray}
where \(a_0\) denotes the kinematic prefactor at leading order, shown in eq.~\eqref{a0}.
 
The subleading contribution \({\cal A}_{\rm 3gluon}^{(1)}\) has the same integral structure as 
eqs.~\eqref{A0_repeat}--\eqref{I0_repeat}, with \(a_0\) replaced by the corrected insertion
$a_0\mapsto a_0\!\left(\tfrac{1}{4}T+\Omega_1\right)$.
Consequently, the perturbative contributions to the three-gluon amplitude exhibit a manifest recursive structure.

In particular, once the 
coefficients \(\Omega_n\) are known, the form of the integral \({\cal I}^{(n)}\) entering the \(n\)-th order correction 
\({\cal A}_{\rm 3gluon}^{(n)}\) can be inferred without repeating the full calculation.

At order \(b^{2n}\), the relevant insertion in the integral arises from multiplying the expansion of the kinematic prefactor with the small-\(b\) expansion of the DOZZ function,
\begin{equation}
\left(a_0 + a_1 b^2 + a_2 b^4 + a_3 b^6 + \cdots\right)
\left(1 + b^2\Omega_1 + b^4\Omega_2 + b^6\Omega_3 + \cdots\right).
\end{equation}
Collecting terms of order \(b^{2n}\) determines the structure of the integrand at this order. Imposing the kinematic constraint 
\(T\) then allows one to express \(a_n\) recursively in terms of \(a_0\), the lower-order coefficients \(\Omega_k\), and the 
previously determined \(a_k\) with \(k<n\).

This observation suggests that the full finite-\(b\) expansion of the three-gluon amplitude is naturally organized as a loop 
expansion in powers of \(b^2\), with higher-order contributions fixed iteratively by lower-order data.

\section{Leading-order structure}
\label{Leading_order}
\subsection{Inverse Mellin representation at leading order}
We calculate the inverse Mellin integral
\begin{eqnarray}
\label{I02}
{\cal I}^{(0)}(\omega_{1},\omega_{2},\omega_{3})
&=&\left(\frac{1}{2\pi i}\right)^3
\int_{c-i\infty}^{c+i\infty}\!{\rm d}\Delta_1{\rm d}\Delta_2{\rm d}\Delta_3\;
M^{\Delta_1+\Delta_2+\Delta_3-1}\,\omega_1^{-\Delta_1}\omega_2^{-\Delta_2}\omega_3^{-\Delta_3}
\nonumber\\[4pt]
&\times&
\Gamma\Big(\frac{\Delta_1+\Delta_2+\Delta_3-1}{2}\Big)\,
\Gamma\Big(\frac{\Delta_1+\Delta_3-\Delta_2-1}{2}\Big)\,
\nonumber\\[4pt]
&\times&
\Gamma\Big(\frac{\Delta_2+\Delta_3-\Delta_1-1}{2}\Big)\,
\Gamma\Big(\frac{\Delta_1+\Delta_2-\Delta_3+3}{2}\Big)\,
\\[4pt]
&\times&
(z_{12}\bar z_{12})^{\frac{1}{2}(\Delta_3-\Delta_1-\Delta_2-3)}
(z_{23}\bar z_{23})^{\frac{1}{2}(\Delta_1-\Delta_2-\Delta_3+1)}
(z_{13}\bar z_{13})^{\frac{1}{2}(\Delta_2-\Delta_1-\Delta_3+1)},\nonumber
\end{eqnarray}
which yields the leading contribution to the three-gluon amplitude \(\mathcal{A}_{\rm 3gluon}\) 
({\it cf.}~eq.~\eqref{I0_repeat}). 
The triple integral \eqref{I02} has the same functional form as the integral studied in \cite{Stieberger2023b}. 
Crucially, in our approach it arises directly as the 
$b^0$ term in the perturbative expansion of the three-point celestial correlator \eqref{3Gcel}, whereas the corresponding 
contribution in \cite{Stieberger2023b} was postulated rather than derived from an explicit expansion.

The integral representation of the Gamma function, 
\begin{equation}\label{Ga}
\Gamma(z)=\int\limits_{0}^{\infty}{\rm d}t\,{\rm e}^{-t}t^{z-1},
\end{equation}
proves useful for rewriting the integral ${\cal I}^{(0)}$ in the form
\begin{eqnarray}
{\cal I}^{(0)}(\omega_{1},\omega_{2},\omega_{3})&=&
\frac{1}{M}\left(\frac{1}{2\pi i}\right)^3
\int\limits_{c-i\infty}^{c+i\infty}{\rm d}\Delta_1{\rm d}\Delta_2{\rm d}\Delta_3\;
\int\limits_{0}^{\infty}{\rm d}t_0{\rm d}t_1{\rm d}t_2{\rm d}t_3\nonumber
\\[5pt]
&\times& 
{\rm e}^{\Delta_1 x_1}\,{\rm e}^{\Delta_2 x_2}\,{\rm e}^{\Delta_3 x_3}
\,{\rm e}^{-t_0-t_1-t_2-t_3}\,\frac{t_{0}^{-\frac{1}{2}}t_{1}^{-\frac{1}{2}}t_{2}^{-\frac{1}{2}}t_{3}^{\frac{3}{2}}}
{t_0t_1t_2t_3}\nonumber
\\[5pt]
&\times& 
(z_{12}\bar z_{12})^{-\frac{3}{2}}
(z_{23}\bar z_{23})^{\frac{1}{2}}
(z_{13}\bar z_{13})^{\frac{1}{2}},
\end{eqnarray}
where
\begin{eqnarray}
x_1 &=& \frac{1}{2}\ln\left(\frac{M^2t_0t_1t_3z_{23}\bar z_{23}}{\omega_{1}^{2}t_2z_{12}\bar z_{12}z_{13}\bar z_{13}}\right),
\nonumber\\[2pt]
x_2 &=& \frac{1}{2}\ln\left(\frac{M^2t_0t_2t_3z_{13}\bar z_{13}}{\omega_{2}^{2}t_1z_{12}\bar z_{12}z_{23}\bar z_{23}}\right),
\nonumber\\[2pt]
x_3 &=& \frac{1}{2}\ln\left(\frac{M^2t_0t_1t_2z_{12}\bar z_{12}}{\omega_{3}^{2}t_3z_{23}\bar z_{23}z_{13}\bar z_{13}}\right)
\end{eqnarray}
and
\begin{equation}\label{ttt}
t_1=\frac{\omega_1\omega_3{\rm e}^{x_1+x_3}|z_{13}|^2}{M^2t_0},
\quad
t_2=\frac{\omega_2\omega_3{\rm e}^{x_2+x_3}|z_{23}|^2}{M^2t_0},
\quad
t_3=\frac{\omega_1\omega_2{\rm e}^{x_1+x_2}|z_{12}|^2}{M^2t_0}.
\end{equation}

Let us observe that the integrations over \(\Delta_1\), \(\Delta_2\), and \(\Delta_3\) --- that is, the inverse Mellin 
transforms --- yield Dirac delta distributions. Indeed, consider a representative integral of the form
\begin{equation}
\frac{1}{2\pi i} \int\limits_{c - i\infty}^{c + i\infty} {\rm e}^{\Delta x} \, \mathrm{d}\Delta,
\end{equation}
where \(\Delta = c + ip\) with \(c > 0\) and \(p \in \mathbb{R}\). Substituting \(\mathrm{d}\Delta = i\, \mathrm{d}p\), we 
rewrite the integral as
\begin{align}
\frac{1}{2\pi i} \int\limits_{c - i\infty}^{c + i\infty} {\rm e}^{\Delta x} \, \mathrm{d}\Delta
&= \frac{1}{2\pi i} \int\limits_{-\infty}^{\infty} {\rm e}^{(c + ip)x} \,i\, \mathrm{d}p\nonumber \\
&= {\rm e}^{cx} \frac{1}{2\pi} \int\limits_{-\infty}^{\infty} {\rm e}^{ipx} \, \mathrm{d}p\nonumber \\
&= {\rm e}^{cx} \delta(x),
\end{align}
where we used the standard Fourier representation of the Dirac delta function.
Thus, each inverse Mellin transform localizes the integrand via a delta function, up to an exponential prefactor 
\({\rm e}^{cx}\), which plays no role when evaluated at \(x=0\).

Simultaneously, one can perform a change of variables $t_1,t_2,t_3\mapsto x_1,x_2,x_3$, which leads to
\begin{eqnarray}\label{I0m}
{\cal I}^{(0)}(\omega_{1},\omega_{2},\omega_{3})&=&\frac{2}{M}
\int\limits_{0}^{\infty}{\rm e}^{-t_0}{\rm d}t_0
\int\limits_{-\infty}^{\infty}{\rm d}x_1{\rm d}x_2{\rm d}x_3\,
{\rm e}^{cx_1}{\rm e}^{cx_2}{\rm e}^{cx_3}\delta(x_1)\delta(x_2)\delta(x_3)\nonumber
\\[5pt]
&&\;\hspace{-40pt}\times\;
\exp\left(-\frac{\omega_1\omega_3{\rm e}^{x_1+x_3}|z_{13}|^2+
\omega_2\omega_3{\rm e}^{x_2+x_3}|z_{23}|^2+
\omega_1\omega_2{\rm e}^{x_1+x_2}|z_{12}|^2}{M^2 t_0}\right)\nonumber
\\[5pt]
&&\;\hspace{-40pt}\times\;
\frac{\omega_1\omega_2}{\omega_3 M}\,t_{0}^{-2}\,\mathrm{e}^{x_1+x_2-x_3}
\\[5pt]
&&\;\hspace{-40pt}=\;
\frac{2\omega_1\omega_2}{\omega_3 M^2}
\int\limits_{0}^{\infty}{\rm e}^{-t_0}\,t_{0}^{-2}
\exp\left(-\frac{\omega_1\omega_3|z_{13}|^2+
\omega_2\omega_3|z_{23}|^2+
\omega_1\omega_2|z_{12}|^2}{M^2 t_0}\right){\rm d}t_0.\nonumber
\end{eqnarray}
The expression in 
\eqref{I0m} follows from the transformation of the measure
\begin{eqnarray}
\mathrm{d}t_1\mathrm{d}t_2\mathrm{d}t_3&=&
|\det J|\,\mathrm{d}x_1\mathrm{d}x_2\mathrm{d}x_3\nonumber\\
&=&2t_1t_2t_3\mathrm{d}x_1\mathrm{d}x_2\mathrm{d}x_3,
\end{eqnarray} 
where
\begin{equation}
J=\dfrac{\partial(t_1,t_2,t_3)}{\partial(x_1,x_2,x_3)}=\begin{pmatrix}t_1&0&t_1\\0&t_2&t_2\\t_3&t_3&0\end{pmatrix},
\end{equation}
together with use of \eqref{ttt} and the identity
\begin{equation}
t_{0}^{-\frac{3}{2}}t_{1}^{-\frac{1}{2}}t_{2}^{-\frac{1}{2}}t_{3}^{\frac{3}{2}}(z_{12}\bar z_{12})^{-\frac{3}{2}}(z_{23}\bar 
z_{23})^{\frac{1}{2}}(z_{13}\bar z_{13})^{\frac{1}{2}}=\dfrac{\omega_1\omega_2}{\omega_3 M}\,t_{0}^{-2}\mathrm{e}^{x_1+x_2-x_3}.
\end{equation}

Following~\cite{Stieberger2023b}, let us define
\begin{equation}
Q^2\equiv\omega_1\omega_3 |z_{13}|^2+\omega_2\omega_3 |z_{23}|^2+\omega_1\omega_2 |z_{12}|^2.
\end{equation}
With this, the leading-order integral takes the form
\begin{equation}
{\cal I}^{(0)}(\omega_{1}, \omega_{2}, \omega_{3}) = \frac{2\omega_1\omega_2}{\omega_3 M^2}
\int\limits_0^\infty \mathrm{d}t_0 \, {\rm e}^{-t_0 - \frac{Q^2}{M^2 t_0}} t_0^{-2}.
\end{equation}

In spinor-helicity variables, the kinematics is encoded as
\begin{equation}
\langle ij \rangle = \sqrt{\omega_i \omega_j}\, z_{ij}, \qquad
[ij] = \sqrt{\omega_i \omega_j}\, \bar z_{ij}, \qquad
2\,p_i \cdot p_j = \langle ij \rangle [ij], \qquad
p_i^2 = 0,
\end{equation}
from which it follows that \(Q^2 = (p_1 + p_2 + p_3)^2\), i.e., \(Q\) is the total momentum of the gluon system.
Finally, the remaining integral evaluates to a modified Bessel function:\footnote{See \S10.32 of the NIST Digital Library of 
Mathematical Functions, \url{https://dlmf.nist.gov/10.32}, 
\[K_{\nu}(z)=\tfrac{1}{2}\Big(\tfrac{1}{2}z\Big)^{\nu}
\int_{0}^{\infty}\exp\left(-t-\frac{z^2}{4t}\right)\frac{1}{t^{\nu+1}}\,{\rm d}t.\]}
\begin{equation}
\int\limits_0^\infty \mathrm{d}t_0 \, {\rm e}^{-t_0 - \frac{Q^2}{M^2 t_0}} t_0^{-2}
= \frac{2M}{Q}\,K_1\!\left(\frac{2Q}{M}\right).
\end{equation}
The full ${\cal I}^{(0)}$ is therefore
\begin{equation}\boxed{
{\cal I}^{(0)}(\omega_{1}, \omega_{2}, \omega_{3}) = 
\frac{4\omega_1\omega_2}{\omega_3 MQ}\,K_1\!\left(\frac{2Q}{M}\right).}
\label{0order}
\end{equation}

\subsection{Zeroth-order result and Liouville--Yang--Mills  correspondence}
The leading contribution~(\ref{A0_repeat}) can therefore be written as
\begin{equation}\label{A0f}\boxed{
{\cal A}_{\rm 3gluon}^{(0)}(\omega_i,z_i,\bar z_i)
=\frac{4\pi\tilde\mu}{b M^4}f^{a_1a_2a_3}\,
\frac{\langle 12\rangle^3}{\langle 23\rangle \langle 31\rangle}\,
\frac{M}{Q}
\,K_1\!\left(\frac{2Q}{M}\right).}
\end{equation}

In the soft limit $Q\to 0$ (small total momentum), the modified Bessel function in \eqref{A0f} behaves as
({\it cf.}~subsection \ref{one-loop-smallQ-beta} below)
\begin{equation}
\frac{2M}{Q}\,K_{1}\!\left(\frac{2Q}{M}\right) = \frac{M^2}{Q^2} + \ldots\;,
\end{equation}
which leads to the following expression for the leading-order three-gluon amplitude:
\begin{equation}\label{AYM}
{\cal A}_{\rm 3gluon}^{(0)}(\omega_i, z_i, \bar z_i) = \frac{2\pi \tilde\mu}{b M^2 Q^2} f^{a_1 a_2 a_3}
\frac{\langle 12\rangle^3}{\langle 23\rangle \langle 31\rangle} + \ldots\;.
\end{equation}
This result can be directly compared with the known form of the tree-level three-gluon 
amplitude in Yang--Mills theory ({\it cf.}~\cite{Stieberger2023b} and references therein):
\begin{equation}\label{AYMp}
{\cal A}_{\rm 3gluon}^{\prime (0)}(\omega_i, z_i, \bar z_i) = \frac{g}{\Lambda \Lambda'} f^{a_1 a_2 a_3} \frac{1}{Q^2}
\frac{\langle 12\rangle^3}{\langle 23\rangle \langle 31\rangle} + \ldots\;,
\end{equation}
where \(g\) is the Yang--Mills coupling constant, \( \Lambda^{-1} \) denotes the canonical coupling of the dilaton to the gauge 
field strength, and \( \Lambda' \) controls the strength of a point-like dilaton source, given by 
\({\cal J}(x) = \delta^{(4)}(x)/\Lambda'\). By comparing \eqref{AYM} and \eqref{AYMp}, one finds the relation
\begin{equation}
g M^2 (\Lambda\Lambda')^{-1} \;=\; 2\pi\tilde\mu\, b^{-1},
\end{equation}
which was postulated in \cite{Stieberger2023b}.

More precisely, it was conjectured in~\cite{Stieberger2023b} that the tree-level approximation in Yang--Mills theory corresponds 
to the infinite central charge limit, 
\begin{equation}
c\;=\;1+6\left(b+\frac{1}{b}\right)^2\;\longrightarrow\;\infty,
\end{equation} 
of Liouville theory, while the subleading terms in the 
Yang--Mills perturbative expansion may likewise admit a Liouville interpretation, provided that the inverse Liouville central 
charge is appropriately related to the Yang--Mills coupling constant \(g\) and to the one-loop \(\beta\)-function.

\section{Subleading structure and one-loop corrections}
\label{Subleading}
\subsection{Operator formulation of the subleading Mellin integral}
\label{inverse_mellin_subleading}
Applying the inverse Mellin transform \eqref{3GIcel} to
\begin{equation}
\bigl(\pi\mu\gamma(b^2)b^{-2b^2}\bigr)^{1-\sum_i\sigma_i}\tilde{\mathcal A}^{(1)}
\end{equation}
yields the subleading contribution:
\begin{equation}\label{A1_restate}
\mathcal{A}_{\mathrm{3gluon}}^{(1)}(\omega_i,z_i,\bar z_i)
=\frac{\pi}{b}\,\frac{\tilde\mu}{M^2}\,f^{a_1a_2a_3}\,
\frac{z_{12}^3}{z_{23}\,z_{31}}\,
\mathcal{I}^{(1)}(\omega_{1},\omega_{2},\omega_{3}) ,
\end{equation}
with the one-loop integral \(\mathcal I^{(1)}\) given by 
\begin{eqnarray}
\label{I1_restate}
&&{\cal I}^{(1)}(\omega_{1},\omega_{2},\omega_{3})
\;=\;
\left(\frac{1}{2\pi i}\right)^3
\int_{c-i\infty}^{c+i\infty}\!{\rm d}\Delta_1{\rm d}\Delta_2{\rm d}\Delta_3\;
M^{\Delta_1+\Delta_2+\Delta_3-1}\,\omega_1^{-\Delta_1}\omega_2^{-\Delta_2}\omega_3^{-\Delta_3}
\nonumber\\[5pt]
&&\hspace{10pt}\times\;
\Gamma\Big(\frac{\Delta_1+\Delta_2+\Delta_3-1}{2}\Big)\,
\Gamma\Big(\frac{\Delta_1+ \Delta_3-\Delta_2-1}{2}\Big)
\nonumber\\[4pt]
&&\hspace{10pt}\times\;
\Gamma\Big(\frac{\Delta_2+\Delta_3-\Delta_1-1}{2}\Big)\,
\Gamma\Big(\frac{\Delta_1+\Delta_2-\Delta_3+3}{2}\Big)
\nonumber\\[5pt]
&&\hspace{10pt}\times\;
\rho_{12}^{\frac{1}{2}(\Delta_3-\Delta_1-\Delta_2-3)}
\rho_{23}^{\frac{1}{2}(\Delta_1-\Delta_2-\Delta_3+1)}
\rho_{13}^{\frac{1}{2}(\Delta_2-\Delta_1-\Delta_3+1)}
\nonumber\\[5pt]
&&\hspace{10pt}\times\;
\Bigg\{\frac{1}{4}\Big[
(\Delta_{1}^{2}+\Delta_{2}^{2}-\Delta_{3}^{2}+4\Delta_{3}-5)\ln\rho_{12}
%\nonumber\\
%&&\hspace{10pt}
+\;(\Delta_{1}^{2}+\Delta_{3}^{2}-\Delta_{2}^{2}-4\Delta_{3}+3)\ln\rho_{13}
\nonumber\\
&&\hspace{30pt}
+\;(\Delta_{3}^{2}+\Delta_{2}^{2}-\Delta_{1}^{2}-4\Delta_{3}+3)\ln\rho_{23}\Big]
%\nonumber\\
%&&\hspace{10pt}
-\gamma_E\left(\Delta_1+\Delta_2+\Delta_3-1\right)\Bigg\}.
\end{eqnarray}
Here \(\rho_{ij}\equiv z_{ij}\bar z_{ij}=|z_{ij}|^2\), and the expression in the
curly brackets is equal to \(\tfrac{1}{4}T + \Omega_1\).\footnote{{\it Cf.}~eq.~\eqref{AAA}.}

In this section we evaluate the subleading contribution \eqref{I1_restate}.
We show that $\mathcal I^{(1)}$ can be obtained by acting with a linear
differential operator in the variables $\ln\omega_i$ on the leading
integral $\mathcal I^{(0)}$. After computing the required derivatives,
we express $\mathcal I^{(1)}$ in an exact closed form as a linear
combination of the modified Bessel functions $K_1$ and $K_0$ with
explicit coefficient functions. 

\subsubsection{Operator representation}
Let us introduce the logarithmic derivative operators
\begin{equation}\label{Ddef}
\mathcal D_i \equiv -\frac{\partial}{\partial\ln\omega_i} = -\omega_i\frac{\partial}{\partial\omega_i},
\qquad i=1,2,3.
\end{equation}
Under the inverse Mellin integrals, factors \(\omega_i^{-\Delta_i}\) ensure the identity
\begin{equation}
\Delta_i \longleftrightarrow \mathcal D_i,
\end{equation}
i.e. each insertion of \(\Delta_i\) in the integrand is equivalent to the action
of \(\mathcal D_i\) on the inverse-Mellin result.  Therefore the polynomial
in \(\Delta_i\) appearing in the curly brackets of \eqref{I1_restate} is mapped
to a linear differential operator acting on \(\mathcal I^{(0)}\).  Performing
this substitution yields the exact identity\footnote{For a detailed justification and derivation 
of eq.~\eqref{I1_operator}, see appendix~\ref{app:operator_proof}.}
\begin{equation}\label{I1_operator}
\boxed{\quad
\mathcal I^{(1)}(\omega_i) \;=\; \mathcal O_{\rho,\gamma_E}\;\mathcal I^{(0)}(\omega_i)\quad}
\end{equation}
with
\begin{equation}\label{operator}
\begin{aligned}
\mathcal O_{\rho,\gamma_E} &=
\frac{1}{4}\Big[
(\mathcal D_1^2+\mathcal D_2^2-\mathcal D_3^2+4\mathcal D_3-5)\ln\rho_{12}\\[4pt]
&\qquad+(\mathcal D_1^2+\mathcal D_3^2-\mathcal D_2^2-4\mathcal D_3+3)\ln\rho_{13}\\[4pt]
&\qquad+(\mathcal D_3^2+\mathcal D_2^2-\mathcal D_1^2-4\mathcal D_3+3)\ln\rho_{23}
\Big]\\[6pt]
&\qquad -\gamma_E\big(\mathcal D_1+\mathcal D_2+\mathcal D_3 - 1\big).
\end{aligned}
\end{equation}
Since \(\mathcal O_{\rho,\gamma_E}\) is linear, with multiplicative coefficient
functions \(\ln\rho_{ij}\) and the constant \(-\gamma_E\), 
it suffices to compute the six building blocks \(\mathcal D_i\mathcal I^{(0)}\) and \(\mathcal D_i^2\mathcal I^{(0)}\) for 
\(i=1,2,3\) to obtain \(\mathcal I^{(1)}\).

\subsubsection{Leading kernel and derivative structure}
We now turn to the computation of the derivatives $\mathcal D_i \mathcal I^{(0)}$ and $\mathcal D_i^2 \mathcal I^{(0)}$ for $i=1,2,3$. To this end, we first recall the definition
\begin{equation}\label{Qdef}
Q^2 \equiv \omega_1\omega_3\rho_{13} + \omega_2\omega_3\rho_{23} + \omega_1\omega_2\rho_{12}.
\end{equation}
It is convenient to introduce the auxiliary quantities
\begin{equation}\label{Qdefs}
{\cal P} \equiv \frac{4\omega_1\omega_2}{\omega_3\,M}, 
\qquad
{\cal F}(Q) \equiv \frac{1}{Q}\,K_{1}\!\left(\frac{2Q}{M}\right),
\end{equation}
where $K_1$ denotes the modified Bessel function of the second kind.
With these definitions the inverse Mellin integral at leading order can be written compactly as 
({\it cf.}~eq.~(\ref{0order}))
\begin{equation}\label{I0again}\boxed{\quad
{\mathcal I}^{(0)}(\omega_i)={\cal P}(\omega_i){\cal F}(Q).\quad}
\end{equation}

Let us introduce the following combinations:
\begin{equation}\label{Sdefs}
\begin{aligned}
S_1 &\equiv Q^2 - \rho_{23}\,\omega_2\omega_3
= \omega_1(\rho_{13}\,\omega_3+\rho_{12}\,\omega_2),\\[4pt]
S_2 &\equiv Q^2 - \rho_{13}\,\omega_1\omega_3
= \omega_2(\rho_{23}\,\omega_3+\rho_{12}\,\omega_1),\\[4pt]
S_3 &\equiv Q^2 - \rho_{12}\,\omega_1\omega_2
= \omega_3(\rho_{13}\,\omega_1+\rho_{23}\,\omega_2),
\end{aligned}
\end{equation}
which yield useful identities:
\begin{equation}
S_1+S_2+S_3 = 2Q^2,\qquad \omega_i\partial_{\omega_i}Q^2 = S_i,\qquad
\omega_i\partial_{\omega_i}Q = \frac{S_i}{2Q}.
\end{equation}
We are now ready to calculate the derivatives of \(\mathcal I^{(0)}\) required by the operator
\eqref{operator}.  First note \({\cal P}\propto\omega_1^{1}\omega_2^{1}\omega_3^{-1}\),
hence
\begin{equation}\label{ellidef}
\mathcal D_i {\cal P} = \ell_i{\cal P},\qquad (\ell_1,\ell_2,\ell_3)=(-1,-1,+1),\qquad \mathcal D_i^2 {\cal P} = {\cal P}.
\end{equation}
Introducing \(x=\frac{2Q}{M}\), we also have
\begin{equation}
{\mathcal D}_i x
=\frac{2}{M}\,{\mathcal D}_iQ
=-\frac{S_i}{MQ}.
\end{equation}

We shall repeatedly use the standard identities for modified Bessel functions:
\begin{equation}
\frac{{\rm d}}{{\rm d}x}K_0(x)=-K_1(x),
\qquad
\frac{{\rm d}}{{\rm d}x}K_1(x)=-K_0(x)-\frac{1}{x}K_1(x).
\end{equation}

\paragraph{First logarithmic derivative.}
Applying the product rule to eq.~\eqref{I0again}, one obtains
\begin{equation}
\mathcal D_i\mathcal I^{(0)}
=
(\mathcal D_i\mathcal P)\,\mathcal F
+
\mathcal P\,\mathcal D_i\mathcal F
=
\ell_i\,\mathcal P\,\mathcal F
+
\mathcal P\,\mathcal D_i\mathcal F.
\label{eq:DI0_start_app}
\end{equation}
The remaining derivative is evaluated using the identities above:
\begin{align}
\mathcal D_i\mathcal F
&=
\mathcal D_i\!\left(Q^{-1}K_1(x)\right)
\nonumber\\
&=
\big(\mathcal D_i Q^{-1}\big)K_1(x)
+
Q^{-1}\mathcal D_i K_1(x).
\label{eq:DF_start_app}
\end{align}
Using
\begin{equation}
\mathcal D_i Q^{-1}
=
- Q^{-2}\mathcal D_i Q
=
\frac{S_i}{2Q^3},
\label{eq:DiQminus1_app}
\end{equation}
together with
\begin{align}
\mathcal D_i K_1(x)
&=
\frac{{\rm d}K_1}{{\rm d}x}\,\mathcal D_i x
\nonumber\\
&=
\left[-K_0(x)-\frac{1}{x}K_1(x)\right]
\left(-\frac{S_i}{MQ}\right)
\nonumber\\
&=
\frac{S_i}{MQ}\,K_0(x)
+
\frac{S_i}{2Q^2}\,K_1(x),
\label{eq:DiK1_app}
\end{align}
eq.~\eqref{eq:DF_start_app} gives
\begin{align}
\mathcal D_i\mathcal F
&=
\frac{S_i}{2Q^3}K_1(x)
+
\frac{1}{Q}
\left[
\frac{S_i}{MQ}K_0(x)
+
\frac{S_i}{2Q^2}K_1(x)
\right]
\nonumber\\
&=
\frac{S_i}{Q^3}K_1(x)
+
\frac{S_i}{MQ^2}K_0(x).
\label{eq:DF_final_app}
\end{align}
Substituting this into eq.~\eqref{eq:DI0_start_app}, one finds
\begin{equation}
\boxed{
\mathcal D_i\mathcal I^{(0)}
=
\ell_i\mathcal P\,\mathcal F(Q)
+
\mathcal P
\left[
\frac{S_i}{Q^3}K_1\!\left(\frac{2Q}{M}\right)
+
\frac{S_i}{MQ^2}K_0\!\left(\frac{2Q}{M}\right)
\right].
}
\label{eq:DI0_final_app}
\end{equation}

\paragraph{Second logarithmic derivative.}
Differentiating eq.~\eqref{eq:DI0_start_app} once more gives
\begin{align}
\mathcal D_i^2\mathcal I^{(0)}
&=
\mathcal D_i\!\left[(\mathcal D_i\mathcal P)\mathcal F+\mathcal P\,\mathcal D_i\mathcal F\right]
\nonumber\\
&=
(\mathcal D_i^2\mathcal P)\,\mathcal F
+
2(\mathcal D_i\mathcal P)(\mathcal D_i\mathcal F)
+
\mathcal P\,\mathcal D_i^2\mathcal F.
\label{eq:D2I0_start_app}
\end{align}
Using eqs.~\eqref{ellidef}, this becomes
\begin{equation}
\mathcal D_i^2\mathcal I^{(0)}
=
\mathcal P\,\mathcal F
+
2\ell_i\,\mathcal P\,\mathcal D_i\mathcal F
+
\mathcal P\,\mathcal D_i^2\mathcal F.
\label{eq:D2I0_reduced_app}
\end{equation}
It therefore remains to evaluate $\mathcal D_i^2\mathcal F$. From eq.~\eqref{eq:DF_final_app},
\begin{equation}
\mathcal D_i\mathcal F
=
\frac{S_i}{Q^3}K_1(x)
+
\frac{S_i}{MQ^2}K_0(x),
\label{eq:DF_recalled_app}
\end{equation}
hence
\begin{equation}
\mathcal D_i^2\mathcal F
=
\mathcal D_i\!\left(\frac{S_i}{Q^3}K_1(x)\right)
+
\mathcal D_i\!\left(\frac{S_i}{MQ^2}K_0(x)\right).
\label{eq:D2F_split_app}
\end{equation}
The two contributions are evaluated separately.
\begin{itemize}
\item {\bf The \(K_{1}\) contribution.}
Applying the product rule,
\begin{align}
\mathcal D_i\!\left(\frac{S_i}{Q^3}K_1(x)\right)
&=
\mathcal D_i\!\left(\frac{S_i}{Q^3}\right)K_1(x)
+
\frac{S_i}{Q^3}\mathcal D_i K_1(x).
\label{eq:K1part_start_app}
\end{align}
Using
\begin{equation}
\mathcal D_i\!\left(\frac{S_i}{Q^3}\right)
=
(\mathcal D_iS_i)Q^{-3}
+
S_i\,\mathcal D_i(Q^{-3}),
\label{eq:DiSiQ3_start_app}
\end{equation}
together with
\begin{equation}
\mathcal D_i(Q^{-3})
=
-3Q^{-4}\mathcal D_iQ
=
\frac{3S_i}{2Q^5},
\label{eq:DiQminus3_app}
\end{equation}
one obtains
\begin{align}
\mathcal D_i\!\left(\frac{S_i}{Q^3}\right)
&=
-\frac{S_i}{Q^3}
+
\frac{3S_i^2}{2Q^5}.
\label{eq:DiSiQ3_final_app}
\end{align}
Using eq.~\eqref{eq:DiK1_app}, eq.~\eqref{eq:K1part_start_app} becomes
\begin{align}
\mathcal D_i\!\left(\frac{S_i}{Q^3}K_1(x)\right)
&=
\left(-\frac{S_i}{Q^3}+\frac{3S_i^2}{2Q^5}\right)K_1(x)
+
\frac{S_i}{Q^3}
\left[
\frac{S_i}{MQ}K_0(x)+\frac{S_i}{2Q^2}K_1(x)
\right]
\nonumber\\
&=
\left[
-\frac{S_i}{Q^3}
+
\frac{2S_i^2}{Q^5}
\right]K_1(x)
+
\frac{S_i^2}{MQ^4}K_0(x).
\label{eq:K1part_final_app}
\end{align}
\item
{\bf The \(K_{0}\) contribution.}
Similarly,
\begin{align}
\mathcal D_i\!\left(\frac{S_i}{MQ^2}K_0(x)\right)
&=
\mathcal D_i\!\left(\frac{S_i}{MQ^2}\right)K_0(x)
+
\frac{S_i}{MQ^2}\mathcal D_i K_0(x).
\label{eq:K0part_start_app}
\end{align}
Now
\begin{equation}
\mathcal D_i\!\left(\frac{S_i}{MQ^2}\right)
=
\frac{1}{M}
\left[
(\mathcal D_iS_i)Q^{-2}
+
S_i\,\mathcal D_i(Q^{-2})
\right],
\label{eq:DiSiQ2_start_app}
\end{equation}
with
\begin{equation}
\mathcal D_i(Q^{-2})
=
-2Q^{-3}\mathcal D_iQ
=
\frac{S_i}{Q^4},
\label{eq:DiQminus2_app}
\end{equation}
so that
\begin{equation}
\mathcal D_i\!\left(\frac{S_i}{MQ^2}\right)
=
-\frac{S_i}{MQ^2}
+
\frac{S_i^2}{MQ^4}.
\label{eq:DiSiQ2_mid_app}
\end{equation}
Using
\begin{equation}
\mathcal D_i K_0(x)
=
\frac{{\rm d}K_0}{{\rm d}x}\,\mathcal D_i x
=
(-K_1(x))
\left(-\frac{S_i}{MQ}\right)
=
\frac{S_i}{MQ}K_1(x),
\label{eq:DiK0_app}
\end{equation}
eq.~\eqref{eq:K0part_start_app} yields
\begin{align}
\mathcal D_i\!\left(\frac{S_i}{MQ^2}K_0(x)\right)
&=
\left(-\frac{S_i}{MQ^2}+\frac{S_i^2}{MQ^4}\right)K_0(x)
+
\frac{S_i^2}{M^2Q^3}K_1(x).
\label{eq:K0part_mid_app}
\end{align}
\end{itemize}
Therefore
\begin{equation}
\mathcal D_i^2\mathcal F
=
\left[
-\frac{S_i}{Q^3}
+
\frac{2S_i^2}{Q^5}
+
\frac{S_i^2}{M^2Q^3}
\right]K_1(x)
+
\left[
-\frac{S_i}{MQ^2}
+
\frac{2S_i^2}{MQ^4}
\right]K_0(x).
\label{eq:D2F_final_app}
\end{equation}

Substituting eqs.~\eqref{eq:DF_final_app} and \eqref{eq:D2F_final_app} 
into eq.~\eqref{eq:D2I0_reduced_app}, one finally obtains
\begin{equation}
\boxed{%
\begin{aligned}
\mathcal D_i^2\mathcal I^{(0)} = {\cal P}\Bigg\{ 
&{\cal F}(Q) \;+\;
\Bigg[ \frac{(2\ell_i-1)S_i}{Q^3} + \frac{2S_i^2}{Q^5} + \frac{S_i^2}{M^2 Q^3}\Bigg]\,K_1\!\Big(\frac{2Q}{M}\Big)\\[6pt]
&\;+\;
\Bigg[ \frac{(2\ell_i-1)S_i}{M Q^2} + \frac{2S_i^2}{M Q^4}\Bigg]\,K_0\!\Big(\frac{2Q}{M}\Big)
\Bigg\}.
\end{aligned}
}
\label{D2I0}
\end{equation}
Equations \eqref{eq:DI0_final_app} and \eqref{D2I0} provide the six required building-block
expressions.

\subsubsection{Exact Bessel decomposition}
We now insert \eqref{eq:DI0_final_app} and \eqref{D2I0} into the operator
\eqref{operator}. Since \(\mathcal O_{\rho,\gamma_E}\) is linear, the
\(K_1\)- and \(K_0\)-contributions can be collected separately.

The coefficient of \(K_0(2Q/M)\) is
\begin{equation}
{\sf C}_0(\omega_i;\rho_{jk})
=\frac{1}{4}\left[
\ln\rho_{12}\,\Xi_{12}^{(0)}
+\ln\rho_{13}\,\Xi_{13}^{(0)}
+\ln\rho_{23}\,\Xi_{23}^{(0)}
\right]
-\frac{2\gamma_E}{M},
\end{equation}
with \(\Xi_{ij}^{(0)}\) given by
\begin{eqnarray}
\label{Xi12_0}
\Xi_{12}^{(0)}
&=&
-\frac{
Q^4-Q^2(S_1+S_2)+2S_1S_2
}{2MQ^4},\\
\label{Xi13_0}
\Xi_{13}^{(0)}
&=&
\frac{
Q^4-4Q^2S_1-Q^2S_2+2S_1^2+2S_1S_2
}{2MQ^4},\\
\label{Xi23_0}
\Xi_{23}^{(0)}
&=&
\frac{
Q^4-Q^2S_1-4Q^2S_2+2S_1S_2+2S_2^2
}{2MQ^4}.
\end{eqnarray} 
The \(-2\gamma_E/M\) term comes from
\begin{equation}
-\gamma_E(\mathcal D_1+\mathcal D_2+\mathcal D_3-1)\mathcal I^{(0)}
=
-\frac{2\gamma_E\mathcal P}{M}
K_0\!\left(\frac{2Q}{M}\right),
\end{equation}
which follows from
\begin{equation}
(\mathcal D_1+\mathcal D_2+\mathcal D_3)\mathcal I^{(0)}
=
\mathcal I^{(0)}
+
\frac{2\mathcal P}{M}
K_0\!\left(\frac{2Q}{M}\right).
\end{equation}
Similarly, the coefficient of \(K_1(2Q/M)\) is
\begin{equation}
{\sf C}_1(\omega_i;\rho_{jk})
=
\frac{1}{4}\Big[
\ln\rho_{12}\,\Xi_{12}^{(1)}
+\ln\rho_{13}\,\Xi_{13}^{(1)}
+\ln\rho_{23}\,\Xi_{23}^{(1)}
\Big],
\end{equation}
with \(\Xi_{ij}^{(1)}\) given by
\begin{subequations}\label{Xi_group}
\begin{align}
\Xi_{12}^{(1)}
&=
-\frac{1}{2M^2Q^5}\Big[
M^2Q^4 - M^2Q^2(S_1+S_2) + 2M^2S_1S_2
\nonumber\\
&\hspace{4.5em}
+\, 2Q^6 - 2Q^4(S_1+S_2) + Q^2S_1S_2
\Big], \label{Xi12}\\[4pt]
\Xi_{13}^{(1)}
&=
\frac{1}{2M^2Q^5}\Big[
M^2Q^4 - 4M^2Q^2S_1 - M^2Q^2S_2 + 2M^2S_1^2 + 2M^2S_1S_2
\nonumber\\
&\hspace{4.5em}
+\, 2Q^6 - 2Q^4(S_1+S_2) + Q^2S_1^2 + Q^2S_1S_2
\Big], \label{Xi13}\\[4pt]
\Xi_{23}^{(1)}
&=
\frac{1}{2M^2Q^5}\Big[
M^2Q^4 - M^2Q^2S_1 - 4M^2Q^2S_2 + 2M^2S_1S_2 + 2M^2S_2^2
\nonumber\\
&\hspace{4.5em}
+\, 2Q^6 - 2Q^4(S_1+S_2) + Q^2S_1S_2 + Q^2S_2^2
\Big]. \label{Xi23}
\end{align}
\end{subequations}
Combining the two pieces yields the exact decomposition
\begin{equation}\label{I1_Bessel_decomp}\boxed{\;
\mathcal I^{(1)}(\omega_i)
=
\mathcal P(\omega_i)\left[
{\sf C}_1(\omega_i;\rho_{jk})\,K_1\!\left(\frac{2Q}{M}\right)
+
{\sf C}_0(\omega_i;\rho_{jk})\,K_0\!\left(\frac{2Q}{M}\right)
\right].\;}
\end{equation}
Using
\begin{equation}
\mathcal I^{(0)}(\omega_i)=\mathcal P(\omega_i)\,\frac{K_1\!\left(\frac{2Q}{M}\right)}{Q},
\end{equation}
this can be rewritten as
\begin{equation}\label{I1vsI0}\boxed{
\mathcal I^{(1)}(\omega_i)
=
\mathcal I^{(0)}(\omega_i)\left[
Q\,{\sf C}_1(\omega_i;\rho_{jk})
+
Q\,{\sf C}_0(\omega_i;\rho_{jk})\,
\frac{K_0\!\left(\frac{2Q}{M}\right)}{K_1\!\left(\frac{2Q}{M}\right)}
\right].}
\end{equation}

\subsection{One-loop amplitude and its small-$Q$ expansion}
\label{one-loop-smallQ-beta}
Combining \eqref{A0_repeat}, \eqref{A1_restate} and \eqref{I1vsI0}, one obtains a compact representation of the one-loop 
three-gluon amplitude:
\begin{equation}\label{A1vsA0}\boxed{
{\cal A}_{\rm 3gluon}^{(1)}(\omega_i,z_i,\bar z_i)
=
{\cal A}_{\rm 3gluon}^{(0)}(\omega_i,z_i,\bar z_i)
\left[
Q\,{\sf C}_1(\omega_i;\rho_{jk})
+
Q\,{\sf C}_0(\omega_i;\rho_{jk})\,
\frac{K_0\!\left(\frac{2Q}{M}\right)}{K_1\!\left(\frac{2Q}{M}\right)}
\right].}
\end{equation}
Here ${\cal A}_{\rm 3gluon}^{(0)}$ denotes the tree-level amplitude (\ref{A0f}), while ${\sf C}_1$ and ${\sf C}_0$ encode the 
subleading structure of the loop correction.

\paragraph{Momentum-space kinematics.}
To make contact with standard field-theoretic variables, it is convenient to express the result in terms of Lorentz invariants. 
Using
\begin{equation}
s_{ij}\equiv 2\,p_i\!\cdot p_j,
\qquad
\rho_{ij}=\frac{s_{ij}}{\omega_i\omega_j},
\qquad
Q^2=s_{12}+s_{13}+s_{23},
\end{equation}
the amplitude can be written entirely in terms of the invariants $s_{ij}$ and the energies $\omega_i$. 
We also define ({\it cf.}~eq.~(\ref{Sdefs}))
\begin{equation}
S_1=s_{12}+s_{13},\quad
S_2=s_{12}+s_{23},\quad
S_3=s_{13}+s_{23},
\quad
s_i=\frac{S_i}{Q^2},
\quad
s_1+s_2+s_3=2.
\end{equation}

\paragraph{Small-$Q$ behaviour.}
We now turn to the computation of the soft-momentum limit of the one-loop amplitude (\ref{A1vsA0}). 
In particular, the soft limit \(Q\to 0\) of the coefficient \(Q{\sf C}_1(\omega_i;\rho_{jk})\) is finite:
\begin{eqnarray}\label{F_explicit}
Q\,{\sf C}_1 \;\longrightarrow\;
\mathfrak{F}(\rho_{ij};s_1,s_2)
&=&
\frac18\bigg[
-\ln\rho_{12}\,\big(1-s_1-s_2+2s_1s_2\big)
\nonumber
\\
&+&
\ln\rho_{13}\,\big(1-4s_1-s_2+2s_1^2+2s_1s_2\big)\nonumber
\\
&+&
\ln\rho_{23}\,\big(1-s_1-4s_2+2s_1s_2+2s_2^2\big)
\bigg],
\end{eqnarray}
where \(\mathfrak{F}(\rho_{ij};s_1,s_2)\) is a kinematic function built from logarithms of \(\rho_{ij}\) and the ratios \(s_i\)
({\it cf.}~appendix~\ref{app:soft_limit_C1}).

We now determine the small-\(Q\) behaviour of
\begin{equation}
Q\,{\sf C}_0(\omega_i;\rho_{jk})\,
\frac{K_0\!\left(\frac{2Q}{M}\right)}{K_1\!\left(\frac{2Q}{M}\right)}.
\end{equation}
In the soft limit we keep the dimensionless ratios \(s_i=\frac{S_i}{Q^2}\) fixed, so that \(S_i=Q^2 s_i\). Substituting this into 
the definitions of \(\Xi_{ij}^{(0)}\), the apparent \(Q^{-4}\) factors cancel and each coefficient has a finite limit as 
\(Q\to 0\). For example,
\begin{equation}
\Xi_{12}^{(0)}
=
-\frac{Q^4-Q^2(S_1+S_2)+2S_1S_2}{2MQ^4}
=
-\frac{1-s_1-s_2+2s_1s_2}{2M},
\end{equation}
and similarly for \(\Xi_{13}^{(0)}\) and \(\Xi_{23}^{(0)}\). Hence \({\sf C}_0(\omega_i;\rho_{jk})\) remains finite in the soft 
limit, with \({\sf C}_0={\cal O}(M^{-1})\).

For the Bessel factor, let \(x=\frac{2Q}{M}\). Using the standard small-\(x\) expansions
\begin{equation}
K_0(x)=-\ln\!\left(\frac{x}{2}\right)-\gamma_E+{\cal O}(x^2),
\qquad
K_1(x)=\frac{1}{x}+{\cal O}(x\ln x),
\end{equation}
we find
\begin{equation}
\frac{K_0(x)}{K_1(x)}
=
\Big(-\ln\!\left(\frac{x}{2}\right)-\gamma_E+{\cal O}(x^2)\Big)
\Big(x+{\cal O}(x^3\ln x)\Big)
=
{\cal O}(x\ln x),
\end{equation}
and therefore\footnote{The term \({\cal O}(x^3 \ln x)\) comes from the correction to \(1/K_1(x)\), but it is subleading.}
\begin{equation}
\frac{K_0\!\left(\frac{2Q}{M}\right)}{K_1\!\left(\frac{2Q}{M}\right)}
=
{\cal O}(Q\ln Q).
\end{equation}
Multiplying by the extra factor of \(Q\) and
combining this with the finiteness of \({\sf C}_0(\omega_i;\rho_{jk})\), we conclude that
\begin{equation}
Q\,{\sf C}_0(\omega_i;\rho_{jk})\,
\frac{K_0\!\left(\frac{2Q}{M}\right)}{K_1\!\left(\frac{2Q}{M}\right)}
=
{\cal O}\!\big(Q^2\ln Q\big).
\end{equation}

Therefore, the ratio ${\cal A}_{\rm 3gluon}^{(1)}/{\cal A}_{\rm 3gluon}^{(0)}$
has a finite soft limit,
\begin{equation}
\frac{{\cal A}_{\rm 3gluon}^{(1)}}{{\cal A}_{\rm 3gluon}^{(0)}}
\;=\;
\mathfrak{F}(\rho_{ij};s_1,s_2)
+
{\cal O}\big(Q^2\ln Q\big),
\end{equation}
while the full one-loop amplitude retains the tree-level soft scaling
${\cal A}_{\rm 3gluon}^{(1)}\sim Q^{-2}$.
The logarithm generated here is an infrared logarithm arising from the small-argument expansion of the Bessel functions, rather 
than the ultraviolet logarithm associated with coupling renormalization. This suggests a separation between universal infrared 
data and ultraviolet running in celestial Yang--Mills amplitudes within the Mellin--Liouville formulation, with UV 
renormalization requiring additional matching. A detailed analysis of this point and its relation to the STZ conjecture on the 
Liouville--Yang--Mills correspondence is left for future work.

\section{Concluding remarks}
\label{sec:concluding_remarks}
In this paper we have proposed a concrete resolution of the ambiguity left open in the STZ conjecture and a practical framework 
for computing finite-$b$ (loop) corrections to tree-level celestial three-gluon amplitudes. A central result is that the zeroth- 
and one-loop contributions admit finite analytic expressions in terms of modified Bessel functions. The leading term reproduces 
the STZ tree-level ansatz and, equivalently, the standard Yang--Mills tree amplitude. The subleading terms organize into a 
transparent structure controlled by ${\sf C}_1$ and ${\sf C}_0$, separating the soft geometry-dependent contribution from the 
infrared logarithmic term, while any ultraviolet contribution is expected to emerge only after renormalization or matching.

Several directions for future work are particularly important. First, one should compute higher-order corrections in the $b^2$ 
expansion. Where possible, this should be done analytically; otherwise, numerical evaluation may be used. The observed recurrence 
properties of the Mellin integrals suggest that the higher-loop structure is iterative and governed by polynomial dependence on 
the $\sigma_i$, so the operator representation of the inverse Mellin transforms should remain effective at every order. 
This strongly suggests that the problem is tractable in general.

Second, it is natural to extend the map to four-point celestial amplitudes and to study its physical consequences. In particular, 
one should formulate perturbative bootstrap equations at loop level, analyze their crossing-symmetry constraints, and examine 
their implications for the unitarity of loop amplitudes.

Third, it would be very interesting to develop an alternative spectral-geometric formulation based on a path-integral description 
of quantum Liouville theory. In such a picture, quantum corrections in $b^2$ to the classical geometry of the celestial sphere 
with punctures could be interpreted as loop corrections to tree amplitudes in the bulk gauge and gravity theories.

\appendix
\renewcommand{\theequation}{\thesection.\arabic{equation}}
\setcounter{equation}{0}

\section{Proof of the operator representation}
\label{app:operator_proof}
In this appendix we justify the identity
\[
\mathcal I^{(1)}(\omega_i)=\mathcal O_{\rho,\gamma_E}\,\mathcal I^{(0)}(\omega_i),
\]
i.e. that the polynomial dependence on the Mellin variables $\Delta_i$ in
\eqref{I1_restate} may be converted into differential operators acting on the
leading inverse Mellin integral $\mathcal I^{(0)}$.

\paragraph{Setup.}
Let
\begin{equation*}
\mathcal I^{(0)}(\omega_1,\omega_2,\omega_3)
=
\left(\frac{1}{2\pi i}\right)^3
\int_{c_1-i\infty}^{c_1+i\infty}\! {\rm d}\Delta_1
\int_{c_2-i\infty}^{c_2+i\infty}\! {\rm d}\Delta_2
\int_{c_3-i\infty}^{c_3+i\infty}\! {\rm d}\Delta_3\;
\mathcal K(\Delta_1,\Delta_2,\Delta_3;\omega_i),
\end{equation*}
where
\begin{equation*}
\mathcal K(\Delta_i;\omega_i)
=
M^{\Delta_1+\Delta_2+\Delta_3-1}
\omega_1^{-\Delta_1}\omega_2^{-\Delta_2}\omega_3^{-\Delta_3}
{\sf F}(\Delta_1,\Delta_2,\Delta_3),
\end{equation*}
and ${\sf F}$ denotes the remaining part of the Mellin integrand, namely the product
of gamma functions and all factors independent of the external variables
$\omega_i$. Using the same notation as in the main text, we define
\begin{equation*}
\mathcal D_i \equiv -\frac{\partial}{\partial \ln \omega_i}
= -\omega_i\frac{\partial}{\partial \omega_i},
\qquad i=1,2,3.
\end{equation*}

The key observation is that $\mathcal D_i$ acts only on the external variable
$\omega_i$ and leaves the Mellin variables $\Delta_j$ untouched. Since
\[
\mathcal D_i\,\omega_i^{-\Delta_i}=\Delta_i\,\omega_i^{-\Delta_i},
\]
we expect that, under the inverse Mellin integral, multiplication by $\Delta_i$
may be replaced by $\mathcal D_i$.

\paragraph{Assumptions.}
The argument uses the standard theorem on differentiation under the integral
sign (Leibniz rule), justified here by absolute convergence and a uniform
integrable bound on the vertical contours. Concretely, we assume:
\begin{itemize}
\item[1.] The integrand $\mathcal K(\Delta_i;\omega_i)$ is holomorphic in each
$\Delta_i$ in a strip containing the vertical contours
$\Re\Delta_i=c_i$.
\item[2.] For $\omega_i$ in any compact subset of $(0,\infty)^3$, the integral
converges absolutely and uniformly on the contours after multiplication by
any polynomial in $\Delta_i$.
\item[3.] The Gamma function factors in ${\sf F}(\Delta_1,\Delta_2,\Delta_3)$ provide
exponential decay in the imaginary directions. This follows from Stirling's
formula on vertical lines:
\[
\Gamma(\sigma+it)
=
\sqrt{2\pi}\,|t|^{\sigma-\frac12}{\rm e}^{-\frac{\pi}{2}|t|}
\bigl(1+{\cal O}(|t|^{-1})\bigr),
\qquad |t|\to\infty,
\]
uniformly for $\sigma$ in compact sets. Hence products of Gamma functions
dominate any polynomial growth coming from powers of $\Delta_i$.
\end{itemize}
These assumptions are standard in Mellin transform theory and are satisfied in the present case.

\paragraph{Action of $\mathcal D_i$ under the integral.}
\begin{lemma}
Under the assumptions above,
\begin{equation*}
\mathcal D_i\,\mathcal I^{(0)}(\omega_1,\omega_2,\omega_3)
=
\left(\frac{1}{2\pi i}\right)^3
\int {\rm d}^3\Delta\;
\Delta_i\,\mathcal K(\Delta_1,\Delta_2,\Delta_3;\omega_i).
\end{equation*}
More generally, for any nonnegative integer $n$,
\begin{equation*}
\mathcal D_i^{\,n}\,\mathcal I^{(0)}
=
\left(\frac{1}{2\pi i}\right)^3
\int {\rm d}^3\Delta\;
\Delta_i^{n}\,\mathcal K(\Delta_i;\omega_i).
\end{equation*}
\end{lemma}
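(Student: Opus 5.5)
The plan is to prove the case $n=1$ by differentiating under the integral sign. The general case then follows by induction on $n$.

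First, fix $i$ and a point $\omega^\ast\in(0,\infty)^3$, and choose a compact neighbourhood $U=\prod_j[a_j,b_j]$ of it, with $0<a_j<b_j$. On the contours write $\Delta_j=c_j+ip_j$. The $\omega$-dependence of $\mathcal K$ sits only in the factor $\omega_i^{-\Delta_i}$, and
\[
\mathcal D_i\,\omega_i^{-\Delta_i}=-\omega_i\partial_{\omega_i}{\rm e}^{-\Delta_i\ln\omega_i}=\Delta_i\,\omega_i^{-\Delta_i}.
\]
So pointwise we have $\mathcal D_i\mathcal K=\Delta_i\mathcal K$. To exchange $\mathcal D_i$ (equivalently $\partial_{\omega_i}$) with the triple contour integral, I will invoke the Leibniz rule in its dominated-convergence form. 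This requires an integrable majorant of $|\partial_{\omega_i}\mathcal K|$ that is independent of $\omega\in U$.

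Next, I construct that majorant. We have $|\omega_j^{-\Delta_j}|=\omega_j^{-c_j}\le\max(a_j^{-c_j},b_j^{-c_j})$, so the modulus of the $\omega$-dependent factor is bounded uniformly on $U$ and, crucially, does not depend on $p_j$. Likewise $|M^{\sum\Delta_j-1}|=M^{\sum c_j-1}$. Hence
\[
|\partial_{\omega_i}\mathcal K|\le C_U\,|\Delta_i|\,|{\sf F}(c+ip)|.
\]
By assumption 2 (backed by the Stirling estimate in assumption 3), the right-hand side is integrable over $\mathbb R^3$ in $(p_1,p_2,p_3)$. The Gamma-function product decays like ${\rm e}^{-\frac{\pi}{4}(\cdots)}$ in the relevant linear combinations of the $p_j$, which dominates the polynomial factor $|\Delta_i|\le c_i+|p_i|$. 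Continuity of $\partial_{\omega_i}\mathcal K$ in $\omega$ together with this majorant gives differentiability of $\mathcal I^{(0)}$ on the interior of $U$, with derivative obtained under the integral. Multiplying by $-\omega_i$ yields the $n=1$ formula.

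For the induction, suppose the formula holds for $n$. Then $\mathcal D_i^n\mathcal I^{(0)}$ is an integral of the same type, with ${\sf F}$ replaced by $\Delta_i^n{\sf F}$. Applying the $n=1$ argument to this new kernel, with the majorant $C_U|\Delta_i|^{n+1}|{\sf F}|$, gives the $n+1$ case. Assumption 2 is exactly the statement that every polynomial multiple stays absolutely integrable. Mixed products $\Delta_i^n\Delta_j^m$ follow in the same way, which is what is needed for the operator $\mathcal O_{\rho,\gamma_E}$. The factors $\ln\rho_{ij}$ and $\gamma_E$ are $\omega$-independent constants, so by linearity they pass straight through.

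The main obstacle is the integrability step. The product of four Gamma functions involves the combinations $\tfrac12(\pm\Delta_1\pm\Delta_2\pm\Delta_3)$, and I must check that their Stirling decay ${\rm e}^{-\frac{\pi}{4}\sum|{\rm linear\ form}|}$ controls all three imaginary directions. This requires the four linear forms in $(p_1,p_2,p_3)$ to span $\mathbb R^3$, which they do. It also requires the real parts on the contours to avoid Gamma poles; here I would use the separate $c_j$ of the appendix, chosen so that all four arguments have positive real part. I would first verify the spanning and norm-equivalence bound $\sum|{\rm forms}|\ge\kappa\|p\|$ explicitly, and deduce exponential decay in $\|p\|$ from it.
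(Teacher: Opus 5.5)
Your proposal is correct and takes the same route as the paper: the pointwise identity $\mathcal D_i\mathcal K=\Delta_i\mathcal K$, differentiation under the integral justified by the Stirling decay of the Gamma factors, and iteration in $n$. You also make explicit two points the paper leaves implicit, namely the $\omega$-uniform majorant on a compact neighbourhood and the fact that the four linear forms in $(p_1,p_2,p_3)$ span $\mathbb R^3$. One small remark: you do not need every Gamma argument to have positive real part, since assumption 1 already keeps the contours off the poles. Insisting on positivity would force $c_3>1$, which is unlike the main text's common contour $\Re\Delta_i=c\to0^+$.
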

{\it Proof.}
We first note that
\[
\mathcal D_i\,\mathcal K(\Delta_j;\omega_j)
=
-\frac{\partial}{\partial\ln\omega_i}
\left(
M^{\sum_j\Delta_j-1}\prod_{j=1}^3 \omega_j^{-\Delta_j}{\sf F}(\Delta_j)
\right)
=
\Delta_i\,\mathcal K(\Delta_j;\omega_j),
\]
because the only $\omega_i$-dependence is contained in the factor
$\omega_i^{-\Delta_i}$.

By the assumptions stated above, the derivative with respect to
$\ln\omega_i$ may be passed under the integral sign. Indeed, the integrand and
its first $\omega_i$-derivative are dominated by an integrable function on the
vertical contours, thanks to the exponential decay from the Gamma functions.
Therefore the Leibniz integral rule applies, giving
\[
\mathcal D_i \mathcal I^{(0)}
=
\left(\frac{1}{2\pi i}\right)^3
\int {\rm d}^3\Delta\;
\mathcal D_i \mathcal K
=
\left(\frac{1}{2\pi i}\right)^3
\int {\rm d}^3\Delta\;
\Delta_i\,\mathcal K.
\]
Iterating the same argument yields the higher-power identity
for $\mathcal D_i^{\,n}$.

\paragraph{Polynomial replacement rule.}
Since the operators $\mathcal D_i$ commute with one another,
any polynomial ${\sf P}(\Delta_1,\Delta_2,\Delta_3)$ may be replaced by the same
polynomial in the differential operators:
\begin{equation*}
{\sf P}(\mathcal D_1,\mathcal D_2,\mathcal D_3)\,\mathcal I^{(0)}
=\left(\frac{1}{2\pi i}\right)^3
\int {\rm d}^3\Delta\;
{\sf P}(\Delta_1,\Delta_2,\Delta_3)\,\mathcal K(\Delta_i;\omega_i).
\end{equation*}
This is the precise sense in which one may ``factor out'' the differential
operators from the inverse Mellin integral: they do not literally commute past
the integral as constants, but rather act on the external variables in such a
way that their action reproduces multiplication by $\Delta_i$ inside the
integrand.

Applying this to the polynomial in curly brackets in \eqref{I1_restate}, we
obtain (\ref{I1_operator})
with $\mathcal O_{\rho,\gamma_E}$ given in \eqref{operator}. The factors
$\ln\rho_{ij}$ are treated as multiplicative coefficients because they depend
only on the external kinematics and are independent of the Mellin variables
$\Delta_i$.

\section{One-loop soft-kinematics calculation}
\label{app:soft_limit_C1}
Let
\begin{equation*}
S_1=s_1Q^2,\qquad S_2=s_2Q^2.
\end{equation*}
We have
\begin{align*}
\Xi_{12}^{(1)}
&=
-\frac{1}{2M^2Q^5}\Big[
M^2Q^4 - M^2Q^2(S_1+S_2) + 2M^2S_1S_2
\nonumber\\
&\hspace{60pt}+ 2Q^6 - 2Q^4(S_1+S_2) + Q^2S_1S_2
\Big]
\nonumber\\
&=
-\frac{1}{2Q}\big(1-s_1-s_2+2s_1s_2\big)
-\frac{Q}{2M^2}\big(2-2(s_1+s_2)+s_1s_2\big),
\\[6pt]
\Xi_{13}^{(1)}
&=
\frac{1}{2M^2Q^5}\Big[
M^2Q^4 - 4M^2Q^2S_1 - M^2Q^2S_2 + 2M^2S_1^2 + 2M^2S_1S_2
\nonumber\\
&\hspace{60pt}+ 2Q^6 - 2Q^4(S_1+S_2) + Q^2S_1^2 + Q^2S_1S_2
\Big]
\nonumber\\
&=
\frac{1}{2Q}\big(1-4s_1-s_2+2s_1^2+2s_1s_2\big)
+\frac{Q}{2M^2}\big(2-2(s_1+s_2)+s_1^2+s_1s_2\big),
\\[6pt]
\Xi_{23}^{(1)}
&=
\frac{1}{2M^2Q^5}\Big[
M^2Q^4 - M^2Q^2S_1 - 4M^2Q^2S_2 + 2M^2S_1S_2 + 2M^2S_2^2
\nonumber\\
&\hspace{60pt}+ 2Q^6 - 2Q^4(S_1+S_2) + Q^2S_1S_2 + Q^2S_2^2
\Big]
\nonumber\\
&=
\frac{1}{2Q}\big(1-s_1-4s_2+2s_1s_2+2s_2^2\big)
+\frac{Q}{2M^2}\big(2-2(s_1+s_2)+s_1s_2+s_2^2\big),
\end{align*}
and
\begin{align*}
Q\,\Xi_{12}^{(1)} &\longrightarrow -\frac12\big(1-s_1-s_2+2s_1s_2\big),\\
Q\,\Xi_{13}^{(1)} &\longrightarrow \frac12\big(1-4s_1-s_2+2s_1^2+2s_1s_2\big),\\
Q\,\Xi_{23}^{(1)} &\longrightarrow \frac12\big(1-s_1-4s_2+2s_1s_2+2s_2^2\big).
\end{align*}
Therefore,
\begin{align*}
Q\,{\sf C}_1
&=
\frac14\Big[
\ln\rho_{12}\,Q\Xi_{12}^{(1)}
+\ln\rho_{13}\,Q\Xi_{13}^{(1)}
+\ln\rho_{23}\,Q\Xi_{23}^{(1)}
\Big]
\nonumber\\
&\longrightarrow
\frac18\bigg[
-\ln\rho_{12}\,\big(1-s_1-s_2+2s_1s_2\big)
+\ln\rho_{13}\,\big(1-4s_1-s_2+2s_1^2+2s_1s_2\big)
\nonumber\\
&\hspace{5.5em}
+\ln\rho_{23}\,\big(1-s_1-4s_2+2s_1s_2+2s_2^2\big)
\bigg].
\end{align*}

\section*{Acknowledgments}
This research was funded by
National Science Centre, Poland (NCN) under grant no. 2023/49/B/ST2/01371.


\begin{thebibliography}{99}
\providecommand{\href}[2]{#2}
\providecommand{\arxivref}[2]{\href{http://arxiv.org/abs/#1}{#2}}
\providecommand{\doiref}[2]{\href{http://dx.doi.org/#1}{#2}}
\providecommand{\nbbstauthor}[1]{#1}
\providecommand{\nbbstjournal}[1]{\textsf{#1}}
\providecommand{\nbbsttitle}[1]{\textit{#1}}
\providecommand{\nbbsturl}[1]{\texttt{#1}}
\providecommand{\nbbsteprint}[1]{\texttt{#1}}
\providecommand{\nbbststyle}{\raggedright\small\parskip0pt}
\nbbststyle

\bibitem{Strominger2017}
\nbbstauthor{A.~Strominger},
\nbbsttitle{Lectures on the Infrared Structure of Gravity and Gauge Theory},
\nbbstjournal{arXiv:1703.05448}.

\bibitem{Pasterski:2021rjz} 
\nbbstauthor{S.~Pasterski, M.~Pate, A.~M.~Raclariu},
\nbbsttitle{Celestial Holography},
\nbbstjournal{arXiv:2111.11392}.

\bibitem{Raclariu:2021zjz}
\nbbstauthor{A.~M.~Raclariu},
\nbbsttitle{Lectures on Celestial Holography},
\nbbstjournal{arXiv:2107.02075}.

\bibitem{Donnay:2023celestial}
\nbbstauthor{L.~Donnay},
\nbbsttitle{Celestial Holography: An Asymptotic Symmetry Perspective},
\nbbstjournal{Phys.~Rept.~1073 (2024) 1-41, arXiv:2310.12922}.

\bibitem{Pasterski:2021dqe}
\nbbstauthor{S.~Pasterski},
\nbbsttitle{Lectures on celestial amplitudes},
\nbbstjournal{Eur.~Phys.~J.~C81 (2021) 12, 1062, arXiv:2108.04801}.

\bibitem{Pasterski:2017kqt}
\nbbstauthor{S.~Pasterski, S.~H.~Shao, A.~Strominger},
\nbbsttitle{Flat Space Amplitudes and Conformal Symmetry of the Celestial Sphere},
\nbbstjournal{Phys.~Rev.~D96 (2017) 065026, arXiv:1701.00049}.

\bibitem{Schreiber:2017jsr}
\nbbstauthor{A.~Schreiber, A.~Volovich, M.~Zlotnikov},
\nbbsttitle{Tree-level gluon amplitudes on the celestial sphere},
\nbbstjournal{Phys.~Lett.~B781 (2018) 349-357, arXiv:1711.08435}.

\bibitem{Kalyanapuram:2020aya}
\nbbstauthor{N.~Kalyanapuram},
\nbbsttitle{Gauge and Gravity Amplitudes on the Celestial Sphere},
\nbbstjournal{Phys.~Rev.~D103 (2021) 085015, arXiv:2012.04579}.

\bibitem{Banerjee:2020kaa}
\nbbstauthor{S.~Banerjee, S.~Ghosh, P.~Paul},
\nbbsttitle{MHV Graviton Scattering Amplitudes and Current Algebra on the Celestial Sphere},
\nbbstjournal{JHEP 02 (2021) 176, arXiv:2008.04330}.

\bibitem{Stieberger2023a}
\nbbstauthor{S.~Stieberger, T.R.~Taylor, B.~Zhu}, 
\nbbsttitle{Celestial Liouville theory for Yang--Mills amplitudes}, 
\nbbstjournal{Phys.~Lett.~B836 (2023) 137588}.

\bibitem{Melton:2024akx} 
\nbbstauthor{W.~Melton, A.~Sharma, A.~Strominger, T.~Wang},
\nbbsttitle{Celestial Dual for Maximal Helicity Violating Amplitudes},
\nbbstjournal{Phys.~Rev.~Lett.~133 (2024) 9, 091603, arXiv:2403.18896}.

\bibitem{Giribet:2024vnk}  
\nbbstauthor{G.~Giribet}, 
\nbbsttitle{Remarks on celestial amplitudes and Liouville theory},
\nbbstjournal{Int.~J.~Mod.~Phys.~D34 (2025) 01, arXiv:2403.03374}.

\bibitem{Donnay:2025yoy}
\nbbstauthor{L.~Donnay, G.~Giribet, B.~Valsesia},
\nbbsttitle{MHV leaf amplitudes from parafermions},
\nbbstjournal{JHEP 06 (2025) 234, arXiv:2501.19332}.

\bibitem{Stieberger2023b}
\nbbstauthor{S.~Stieberger, T.R.~Taylor, B.~Zhu}, 
\nbbsttitle{Yang--Mills as a Liouville theory}, 
\nbbstjournal{Phys.~Lett.~B846 (2023) 138229}.

\bibitem{BPZ} 
\nbbstauthor{A.A.~Belavin, A.M.~Polyakov, A.B.~Zamolodchikov}, 
\nbbsttitle{Infinite conformal symmetry in 2D quantum field theories}, 
\nbbstjournal{Nucl.~Phys.~B241 (1984) 333}.

\bibitem{S}
\nbbstauthor{N.~Seiberg}, 
\nbbsttitle{Notes on Quantum Liouville Theory and Quantum Gravity}, 
\nbbstjournal{Prog.~Theor.~Phys.~Suppl.~102 (1990) 319-349}.

\bibitem{DO} 
\nbbstauthor{H.~Dorn, H.J.~Otto}, 
\nbbsttitle{Two and three point functions in Liouville theory}, 
\nbbstjournal{Nucl.~Phys.~B429 (1994) 375-388, hep-th/9403141}.

\bibitem{T1} 
\nbbstauthor{J.~Teschner}, 
\nbbsttitle{On the Liouville three point function}, 
\nbbstjournal{Phys.~Lett.~B363 (1995) 65-70, hep-th/9507109}.

\bibitem{ZZ5} 
\nbbstauthor{A.B.~Zamolodchikov, A.B.~Zamolodchikov}, 
\nbbsttitle{Structure constants and conformal bootstrap in Liouville field theory}, 
\nbbstjournal{Nucl.~Phys.~B477 (1996) 577, hep-th/9506136}.

\bibitem{T3}
\nbbstauthor{J.~Teschner},
\nbbsttitle{Liouville theory revisited}, 
\nbbstjournal{Class.~Quant.~Grav.~18 (2001) R153, hep-th/0104158}.

\bibitem{T4}
\nbbstauthor{J.~Teschner},
\nbbsttitle{A lecture on the Liouville vertex operators}, 
\nbbstjournal{Int.~J.~Mod.~Phys.~A19 (2004) 436-458, hep-th/0303150}.

\bibitem{Nakayama:2004vk}
\nbbstauthor{Y.~Nakayama},
\nbbsttitle{Liouville field theory: A Decade after the revolution}
\nbbstjournal{Int.~J.~Mod.~Phys.~A19 (2004) 2771-2930, hep-th/0402009}.

\bibitem{HMW}
\nbbstauthor{D.~Harlow, J.~Maltz, E.~Witten},
\nbbsttitle{Analytic Continuation of Liouville Theory},
\nbbstjournal{JHEP 1112 (2011) 071, arXiv:1108.4417}.

\bibitem{SR14}
\nbbstauthor{S.~Ribault}, 
\nbbsttitle{Conformal field theory on the plane}, 
\nbbstjournal{arXiv:1406.4290}.

\bibitem{Thorn}
\nbbstauthor{Ch.B.~Thorn},
\nbbsttitle{Liouville Perturbation Theory},	
\nbbstjournal{Phys.~Rev.~D66 (2002) 027702, arXiv:hep-th/0204142}.	

\bibitem{Ferrari2025}
\nbbstauthor{F.~Ferrari, M.~R.~Pi\c{a}tek, A.~R.~Pietrykowski},
\nbbsttitle{Small-b expansion of the DOZZ formula for light operators},
\nbbstjournal{arXiv:2509.21182v2}.

%\bibitem{HJ04} 
%\nbbstauthor{L.~Hadasz, Z.~Jask\'{o}lski}, 
%\nbbsttitle{Classical Liouville action on the sphere with three hyperbolic singularities}, 
%\nbbstjournal{Nucl.~Phys.~B694 (2004) 493, hep-th/0309267}. 
%
%\bibitem{HJP05}
%\nbbstauthor{L.~Hadasz, Z.~Jask\'{o}lski, M.~Pi\c{a}tek}, 
%\nbbsttitle{Classical geometry from the quantum Liouville theory},
%\nbbstjournal{Nucl.~Phys.~B724 (2005) 529, hep-th/0504204}.

%\bibitem{Pasterski:2021rjz} 
%\nbbstauthor{S.~Pasterski, M.~Pate, A.~M.~Raclariu},
%\nbbsttitle{Celestial Holography},
%\nbbstjournal{arXiv:2111.11392}.
%
%\bibitem{Raclariu:2021zjz}
%\nbbstauthor{A.~M.~Raclariu},
%\nbbsttitle{Lectures on Celestial Holography},
%\nbbstjournal{arXiv:2107.02075}.
%
%\bibitem{Donnay:2023celestial}
%\nbbstauthor{L.~Donnay},
%\nbbsttitle{Celestial Holography: An Asymptotic Symmetry Perspective},
%\nbbstjournal{Phys.~Rept.~1073 (2024) 1-41, arXiv:2310.12922}.
%
%\bibitem{Pasterski:2021dqe}
%\nbbstauthor{S.~Pasterski},
%\nbbsttitle{Lectures on celestial amplitudes},
%\nbbstjournal{Eur.~Phys.~J.~C81 (2021) 12, 1062, arXiv:2108.04801}.
%
%\bibitem{Pasterski:2017kqt}
%\nbbstauthor{S.~Pasterski, S.~H.~Shao, A.~Strominger},
%\nbbsttitle{Flat Space Amplitudes and Conformal Symmetry of the Celestial Sphere},
%\nbbstjournal{Phys.~Rev.~D96 (2017) 065026, arXiv:1701.00049}.
%
%\bibitem{Schreiber:2017jsr}
%\nbbstauthor{A.~Schreiber, A.~Volovich, M.~Zlotnikov},
%\nbbsttitle{Tree-level gluon amplitudes on the celestial sphere},
%\nbbstjournal{Phys.~Lett.~B781 (2018) 349-357, arXiv:1711.08435}.
%
%\bibitem{Kalyanapuram:2020aya}
%\nbbstauthor{N.~Kalyanapuram},
%\nbbsttitle{Gauge and Gravity Amplitudes on the Celestial Sphere},
%\nbbstjournal{Phys.~Rev.~D103 (2021) 085015, arXiv:2012.04579}.
%
%\bibitem{Banerjee:2020kaa}
%\nbbstauthor{S.~Banerjee, S.~Ghosh, P.~Paul},
%\nbbsttitle{MHV Graviton Scattering Amplitudes and Current Algebra on the Celestial Sphere},
%\nbbstjournal{JHEP 02 (2021) 176, arXiv:2008.04330}.
%
%\bibitem{Stieberger2023a}
%\nbbstauthor{S.~Stieberger, T.R.~Taylor, B.~Zhu}, 
%\nbbsttitle{Celestial Liouville theory for Yang--Mills amplitudes}, 
%\nbbstjournal{Phys.~Lett.~B836 (2023) 137588}.
%
%\bibitem{Melton:2024akx}
%\nbbstauthor{W.~Melton, A.~Sharma, A.~Strominger, T.~Wang},
%\nbbsttitle{Celestial Dual for Maximal Helicity Violating Amplitudes},
%\nbbstjournal{Phys.~Rev.~Lett.~133 (2024) 9, 091603, arXiv:2403.18896}.

%\bibitem{Giribet:2024vnk}  
%\nbbstauthor{G.~Giribet}, 
%\nbbsttitle{Remarks on celestial amplitudes and Liouville theory},
%\nbbstjournal{Int.~J.~Mod.~Phys.~D34 (2025) 01, arXiv:2403.03374}.
%
%\bibitem{Donnay:2025yoy}
%\nbbstauthor{L.~Donnay, G.~Giribet, B.~Valsesia},
%\nbbsttitle{MHV leaf amplitudes from parafermions},
%\nbbstjournal{JHEP 06 (2025) 234, arXiv:2501.19332}.

%\bibitem{Perlmutter:2015}
%\nbbstauthor{E.~Perlmutter},
%\nbbsttitle{Virasoro conformal blocks in closed form},
%\nbbstjournal{JHEP 02 (2015) 023, arXiv:1502.07742}.
%
%\bibitem{Bombini:2018jrg}
%\nbbstauthor{A.~Bombini, S.~Giusto, R.~Russo},  
%\nbbsttitle{A note on the Virasoro blocks at order $1/c$}, 
%\nbbstjournal{Eur.~Phys.~J.~C79, 3 (2019), arXiv:1807.07886}.
\end{thebibliography}
\end{document}